\newtheorem{thm}{Theorem}				
\newtheorem{prop}[thm]{Proposition}		
\newtheorem{lem}[thm]{Lemma}			
\newtheorem{cor}[thm]{Corollary}
\DeclareMathOperator{\pr}{\mathrm{Pr}} 		
\begin{document}

\title{The Birthday Problem and \\Zero-Error List Codes}

\author{Parham Noorzad,
Michelle Effros, 
Michael Langberg,
and Victoria Kostina%
\thanks{This paper was presented in part at the 2017 IEEE International 
Symposium of Information Theory in Aachen.}
\thanks{This material is based upon work supported by the National Science Foundation 
under Grant Numbers 1321129, 1527524, and 1526771.}%
\thanks{P. Noorzad was with the California
Institute of Technology, Pasadena, CA 91125 USA. He is now
with Qualcomm Research, San Diego, CA 92121 USA
(email: parham@qti.qualcomm.com). }%
\thanks{M. Effros and V. Kostina are with the California
Institute of Technology, Pasadena, CA 91125 USA
(emails: effros@caltech.edu, vkostina@caltech.edu). }%
\thanks{M. Langberg is with the State
University of New York at Buffalo, Buffalo, NY 14260 USA
(email: mikel@buffalo.edu).}}

\maketitle

\begin{abstract} 
As an attempt to bridge the gap between the probabilistic world of
classical information theory and the combinatorial world of
zero-error information theory,
this paper studies the performance of randomly
generated codebooks over discrete memoryless channels under a 
zero-error list-decoding constraint. This study allows the application of
tools from one area to the other.
Furthermore, it leads to an 
information-theoretic formulation of the birthday problem,
which is concerned with the probability that in a given population,
a fixed number of people have the same birthday. 
Due to the lack of a closed-form expression for this probability
when the distribution of birthdays is not uniform,
the resulting expression is not simple to analyze; in
the information-theoretic formulation, however, the
asymptotic behavior of this probability can be characterized
exactly for all distributions.

\end{abstract}

\section{Introduction} \label{sec:intro}

Finding channel capacity under a zero-error constraint
requires fundamentally different tools and ideas
from the study of capacity under an asymptotically negligible error constraint; the 
former is essentially a graph-theoretic problem \cite{Shannon56}, while
the latter mainly relies on probabilistic arguments \cite{Shannon48}.
To obtain a better understanding of the contrast between
zero-error information theory and classical information theory, 
we apply probabilistic tools to 
the study of zero-error channel coding. 

The random code construction of Shannon \cite{Shannon48} shows
that for the discrete memoryless channel 
$(\mathcal{X},W(y|x),\mathcal{Y})$, a sequence of 
rate-$R$ codebooks $(\mathscr{C}_n)_{n=1}^\infty$,
randomly generated according to distribution $P(x)$,
achieves 
\begin{equation} \label{eq:expPe}
  \lim_{n\rightarrow\infty}
  \mathbb{E}\big[P_e^{(n)}
  (\mathscr{C}_n)\big]=0,
\end{equation}
if $R<I(X;Y)$. In (\ref{eq:expPe}),
$P_e^{(n)}(\mathscr{C}_n)$ is the
average probability of error of codebook $\mathscr{C}_n$.
From Markov's inequality, it follows that $R<I(X;Y)$ 
suffices to ensure that
\begin{equation} \label{eq:PrPe}
  \forall\: \epsilon\in (0,1)\colon
  \lim_{n\rightarrow\infty}
  \pr\big\{P_e^{(n)}(\mathscr{C}_n)
  \leq\epsilon\big\}=1.
\end{equation}

Our aim is to understand the behavior of randomly generated 
codebooks when $\epsilon=0$ in (\ref{eq:PrPe}). 
Specifically, we seek to find necessary and sufficient conditions on
the rate $R$
in terms of the channel $W$ and input distribution $P(x)$, such that
the sequence of randomly generated codebooks
$(\mathscr{C}_n)_{n=1}^\infty$ satisfies
\begin{equation*}
  \lim_{n\rightarrow\infty}
  \pr\big\{P_e^{(n)}(\mathscr{C}_n)=0\big\}=1.
\end{equation*}
In other words, our goal is to quantify the performance of randomly
generated codebooks under the zero-error constraint. However, we do 
not limit ourselves to the case where the decoder only has one output.
Instead, similar to works by Elias \cite{Elias1, Elias2}, we allow 
the decoder to output a fixed number of messages.
We say a codebook corresponds to a ``zero-error $L$-list code'' if
for every message the encoder transmits, the decoder
outputs a list of size at most $L$ that contains that message.
Similar to the zero-error list capacity problem 
\cite{KornerMarton2}, this problem can be solved
using only knowledge of the distinguishability hypergraph of the channel. 
We discuss hypergraphs
and their application to zero-error list codes in Section \ref{sec:model}.
We then present our main result in Theorem \ref{thm:main} 
of Section \ref{sec:problem}, where
we provide upper and lower bounds on the rate of randomly generated 
zero-error list codes.

An important special case occurs when the channel
$W$ is an identity, that is,
\begin{equation} \label{eq:defIdChannel}
  \forall\:(x,y)\in\mathcal{X}\times\mathcal{Y}\colon
  W(y|x)=\mathbf{1}\{y=x\}.
\end{equation}
In this case, our setup leads to an information-theoretic formulation 
of the birthday problem which we next describe. 

\subsection{The Birthday Problem}

The classical birthday problem studies the probability that
a fixed number of individuals in a population have the same birthday 
under the assumption that the birthdays are independent and 
identically distributed (i.i.d.).
While this probability is simple to analyze 
when the i.i.d.\ distribution is uniform due to a closed-form expression, 
the same is not true in the non-uniform case. 
Numerical approximations for this probability
are given in \cite{GailEtAl, Nunnikhoven, Stein}.

We can frame the birthday problem as a special case of our setup above. 
Consider the problem of channel coding 
over the identity channel defined by (\ref{eq:defIdChannel}). 
Note that over this channel, a codebook corresponds to a zero-error
$L$-list code if and 
only if no group of $L+1$ messages are mapped to the same codeword.
Associating codewords with birthdays, we obtain an 
information-theoretic formulation of the birthday problem: Given a randomly generated
codebook (set of birthdays), what is the probability that some subset of 
$L+1$ codewords (birthdays) 
are identical? In Corollary \ref{cor:BDayProblem},
we provide the precise asymptotic behavior of this probability in 
terms of the R{\'e}nyi entropy of order $L+1$.

We next describe prior works that study the birthday problem
in a context similar to our work.

\subsection{Prior Works}

In \cite{Renyi2}, R{\'e}nyi 
states a result in terms of the random subdivisions of a set, 
which we reformulate in Appendix B in terms of 
zero-error $L$-list codes over the identity channel. 
R{\'e}nyi's result differs from ours in the asymptotic regime under
consideration. 

Another related work is \cite{Fujiwara}, where Fujiwara studies 
a variation of the birthday problem for the case $L=1$ in the setting of quantum
information theory. Specifically, Fujiwara determines the maximum growth rate of 
the cardinality of a sequence of codebooks that satisfy an 
``asymptotic strong orthogonality'' condition. 

Finally, we remark that the birthday problem also arises in the
context of cryptography. For a given hash function, the quantity of interest 
is the number of hash function evaluations required to find a ``collision'';
that is, two inputs that are mapped to the same output.\footnote{The attempt of finding such inputs
is referred to as the birthday attack in cryptography \cite[p. 187]{Joux}.} In this context,
the default assumption is that the hash function values are uniformly distributed as
this leads to the lowest collision probability \cite[p. 192]{Joux}. However, Bellare
and Kohno \cite{BellareKohno} argue that the uniformity assumption need not hold for
real-world hash functions. Thus, it is important to study the non-uniform case.
In \cite[Theorem 10.3]{BellareKohno}, the authors provide upper and lower bounds for
the collision probability in terms of a quantity they call ``balance,'' which is the same
as R{\'e}nyi entropy of order two up to the base of the logarithm.

In the next section, we provide an introduction into hypergraphs
and their connection to zero-error list codes. The proofs of all of
our results appear in Section \ref{sec:proofs}.

\section{Hypergraphs and Zero-Error List Codes} 
\label{sec:model}
A discrete channel is a triple
\begin{equation*}
  \big(\mathcal{X},W(y|x),\mathcal{Y}\big),
\end{equation*}
where $\mathcal{X}$ and $\mathcal{Y}$ are finite sets, 
and for each $x\in\mathcal{X}$, $W(\cdot|x)$ is a 
probability mass function on $\mathcal{Y}$. We say
an output $y\in\mathcal{Y}$ is ``reachable'' from an input
$x\in\mathcal{X}$ if $W(y|x)>0$.

A hypergraph $G=(\mathcal{V},\mathcal{E})$ consists of a
set of nodes $\mathcal{V}$ and a 
set of edges $\mathcal{E}\subseteq 2^\mathcal{V}$, where 
$2^\mathcal{V}$ denotes the collection of subsets of $\mathcal{V}$.  
We assume that $\mathcal{V}$ is finite and 
each edge has cardinality at least two. 

The distinguishability hypergraph of channel $W$, 
denoted by $G(W)$, is a hypergraph with vertex set $\mathcal{X}$ 
and an edge set $\mathcal{E}\subseteq 2^\mathcal{X}$ 
which contains collections of inputs
that are ``distinguishable'' at the decoder. 
Formally, $\mathcal{E}$ consists of all subsets 
$e\subseteq\mathcal{X}$ that satisfy
\begin{equation} \label{eq:edgeDef}
  \forall\: y\in\mathcal{Y}\colon
  \prod_{x\in e} W(y|x)=0;
\end{equation}
that is, $e\subseteq\mathcal{X}$ is an edge if no $y\in\mathcal{Y}$
is reachable from all $x\in e$. 
Note that $G(W)$ has the property that the superset of any edge is an
edge; that is, if $e\in\mathcal{E}$ and
$e\subseteq e'\subseteq\mathcal{X}$, 
then $e'\in\mathcal{E}$. Proposition \ref{prop:distinguishabilityGraphsCharacterization},
below, shows that any hypergraph $G$ with this property is the distinguishability
hypergraph of some channel $W$.


An independent set of a 
hypergraph $G=(\mathcal{V},\mathcal{E})$ is a 
subset $\mathcal{I}\subseteq\mathcal{V}$
such that no subset of $\mathcal{I}$ is in $\mathcal{E}$.
For the channel $(\mathcal{X},W(y|x),\mathcal{Y})$,
an independent set of $G(W)$ corresponds to 
a collection of inputs $\mathcal{I}\subseteq\mathcal{X}$ 
for which there exists an output $y\in\mathcal{Y}$
that is reachable from any $x\in\mathcal{I}$.  
The hypergraph $G$ is complete multipartite if 
there exists a partition $\{\mathcal{I}_j\}_{j=1}^k$ of $\mathcal{V}$
such that each $\mathcal{I}_j$ is an independent set, and
for every subset $e\subseteq \mathcal{V}$, 
either $e\in\mathcal{E}$, or $e\subseteq\mathcal{I}_j$ for some
$1\leq j\leq k$. 

As an example, consider a deterministic channel $(\mathcal{X},W(y|x),\mathcal{Y})$,
where for some mapping $\varphi\colon \mathcal{X}\rightarrow\mathcal{Y}$,
\begin{equation*}
  W(y|x)=\mathbf{1}\{y=\varphi(x)\}.
\end{equation*}
For this channel, $G(W)$ is a complete multipartite hypergraph.
Specifically, the sets $\{\varphi^{-1}(y)\}_{y\in\mathcal{Y}}$ are the independent
components of $G$, where for $y\in\mathcal{Y}$,
\begin{equation*}
  \varphi^{-1}(y):=\big\{
  x\in\mathcal{X}\big|\varphi(x)=y\big\}.
\end{equation*}
The next proposition gives a complete characterization of hypergraphs
$G$ that correspond to the distinguishability hypergraphs of
arbitrary and deterministic channels, respectively.

\begin{prop} \label{prop:distinguishabilityGraphsCharacterization}
Consider a hypergraph $G=(\mathcal{V},\mathcal{E})$. Then there exists
a discrete channel $(\mathcal{X},W(y|x),\mathcal{Y})$ such that 
$G=G(W)$ if and only if the superset of every edge of $G$ is an edge. 
Furthermore, there exists a deterministic channel $W$ such that $G=G(W)$
if and only if $G$ is complete multipartite. 
\end{prop}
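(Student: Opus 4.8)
The plan is to handle the two ``only if'' directions first, since these follow directly from the definition of $G(W)$, and then to spend the effort on the two ``if'' directions, each of which requires building a channel whose distinguishability hypergraph is the prescribed $G$. For the first ``only if,'' if $G=G(W)$ and $e\in\mathcal{E}$, then for each $y$ some $x\in e$ has $W(y|x)=0$; since that same $x$ lies in any superset $e'\supseteq e$, the product in (\ref{eq:edgeDef}) still vanishes, so $e'\in\mathcal{E}$. This is exactly the upward-closure property already recorded in the text. For the second ``only if,'' a deterministic $W=\mathbf{1}\{y=\varphi(x)\}$ makes $e$ a non-edge precisely when all inputs in $e$ share a common image, i.e.\ when $e\subseteq\varphi^{-1}(y)$ for some $y$; taking the partition $\{\varphi^{-1}(y)\}$ over $y$ in the image of $\varphi$ exhibits $G(W)$ as complete multipartite, as in the worked example.

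For the first ``if,'' assume $G$ is upward closed. I would first observe that upward closure makes the non-edges downward closed, so that a set is an independent set of $G$ exactly when it is itself a non-edge; in particular every singleton is independent (edges have size at least two) and hence every vertex lies in some maximal independent set. I would then take $\mathcal{X}=\mathcal{V}$, let $\mathcal{Y}$ be the finite collection of maximal independent sets of $G$, and declare the output $y_{\mathcal{I}}$ reachable from $x$ exactly when $x\in\mathcal{I}$, assigning, say, uniform positive weights so that each $W(\cdot|x)$ is a genuine pmf (well-defined because each $x$ lies in at least one maximal independent set). The verification then reduces to the chain of equivalences: $e$ is an edge of $G(W)$ iff no maximal independent set contains $e$ iff $e$ is not an independent set of $G$ iff $e\in\mathcal{E}$, where the last step is exactly upward closure.

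For the second ``if,'' assume $G$ is complete multipartite with partition $\{\mathcal{I}_j\}_{j=1}^k$. I would set $\mathcal{Y}=\{1,\dots,k\}$, define $\varphi(x)=j$ for the unique $j$ with $x\in\mathcal{I}_j$, and take $W(y|x)=\mathbf{1}\{y=\varphi(x)\}$. By the deterministic computation above, the non-edges of $G(W)$ are exactly the sets contained in some fiber $\varphi^{-1}(j)=\mathcal{I}_j$; by the definition of complete multipartite, together with the fact that a subset of an independent set cannot be an edge (so the two alternatives are mutually exclusive), these coincide with the non-edges of $G$. Hence $G=G(W)$.

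I expect the main obstacle to be the verification in the first ``if'' that the edges of $G(W)$ coincide with $\mathcal{E}$, rather than the construction of the channel itself. The delicate point is that the chain of equivalences relies on upward closure to identify ``non-edge'' with ``independent set,'' and one must separately check that the degenerate cases are consistent with the hypergraph conventions: the empty set and singletons are non-edges of $G(W)$, since each $W(\cdot|x)$ is a pmf and so the product in (\ref{eq:edgeDef}) cannot vanish for a set of size less than two, which matches the assumption that edges have cardinality at least two. Once these boundary cases and the finiteness of $\mathcal{Y}$ are dispatched, both directions close.
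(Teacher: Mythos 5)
Your proof is correct and follows essentially the same route as the paper's: both construct the channel by taking the output alphabet to be a family of vertex sets with $x$ reaching $y$ iff $x\in y$ under a uniform normalization (the paper uses all non-edges $2^{\mathcal{V}}\setminus\mathcal{E}$ where you use the maximal independent sets, but under upward closure these yield the same reachability structure), and both handle the deterministic case with the identical fiber construction $W(y|x)=\mathbf{1}\{x\in\mathcal{I}_y\}$. The boundary checks you flag (every vertex lies in some output set so each $W(\cdot|x)$ is a genuine pmf, and $\mathcal{Y}$ is nonempty) are precisely the ones the paper also dispatches.
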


Given the connection between channels and hypergraphs in 
Proposition \ref{prop:distinguishabilityGraphsCharacterization},
we now find a graph-theoretic condition for a
mapping to be a zero-error list code. We present this condition
in Proposition \ref{prop:existence}. Prior to that, we define
necessary notation.

For positive
integers $i$ and $j$ with $j\geq i$, 
$[i:j]$ denotes the set $\{i,\dots,j\}$. When $i=1$, we 
denote $[1:j]$ by
$[j]$. For example, $[1]=\{1\}$ and 
$[2:4]=\{2,3,4\}$. For any set $\mathcal{A}$ and 
nonnegative integer $k\leq |\mathcal{A}|$,
define the set
\begin{equation*}
  \binom{\mathcal{A}}{k}:=
  \big\{\mathcal{B}\big|\mathcal{B}\subseteq\mathcal{A},
  |\mathcal{B}|=k\big\}.
\end{equation*}

An $(M,L)$ list code for the channel $(\mathcal{X},W(y|x),\mathcal{Y})$ 
consists of an encoder 
\begin{equation*}
  f\colon [M]\rightarrow \mathcal{X},
\end{equation*} 
and a decoder 
\begin{equation*}
  g\colon \mathcal{Y}\rightarrow\bigcup_{\ell=1}^L\binom{[M]}{\ell}.
\end{equation*}
The pair $(f,g)$ is an
$(M,L)$ \emph{zero-error} list code for channel $W$ if for 
every $m\in [M]$ and $y\in\mathcal{Y}$
satisfying $W(y|f(m))>0$, we have $m\in g(y)$. 

Proposition \ref{prop:existence} provides a
necessary and sufficient condition for the existence
of an $(M,L)$ zero-error list code for $W$ in terms of
its distinguishability hypergraph $G(W)$.

\begin{prop} \label{prop:existence}
Consider a discrete channel 
$(\mathcal{X},W(y|x),\mathcal{Y})$, 
positive integers $M$ and $L$, and 
an encoder
\begin{equation*}
  f\colon [M]\rightarrow \mathcal{X}.
\end{equation*} 
For this encoder, a decoder 
\begin{equation*}
  g\colon \mathcal{Y}\rightarrow\bigcup_{\ell=1}^L\binom{[M]}{\ell}
\end{equation*}
exists such that the pair $(f,g)$ is an
$(M,L)$ zero-error list code for $W$
if and only if the image of every $(L+1)$-subset 
$\{m_\ell\}_{\ell=1}^{L+1}$ of $[M]$ under $f$ 
is an edge of $G(W)$. 
\end{prop}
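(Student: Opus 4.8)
The plan is to reduce the existence of a valid decoder to a counting condition on the channel outputs, and then translate that condition into the edge condition on $G(W)$ via the definition (\ref{eq:edgeDef}). First I would fix the encoder $f$ and, for each output $y\in\mathcal{Y}$, introduce the set of messages consistent with $y$,
\[
  S(y):=\big\{m\in[M]\,\big|\,W(y\mid f(m))>0\big\}.
\]
The zero-error constraint says precisely that a decoder $g$ is valid if and only if $S(y)\subseteq g(y)$ for every $y$. Since $g(y)$ is constrained to be a list of size at most $L$, I would argue that such a $g$ exists if and only if $|S(y)|\leq L$ for all $y$: necessity is immediate from $S(y)\subseteq g(y)$ and $|g(y)|\leq L$; for sufficiency I would set $g(y)=S(y)$ when $1\leq|S(y)|\leq L$ and $g(y)=\{1\}$ (any singleton, to respect the lower size bound of the codomain) when $S(y)=\emptyset$, and verify directly that the resulting pair $(f,g)$ is a zero-error list code.

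Second, I would show that the counting condition $\max_{y}|S(y)|\leq L$ is equivalent to the stated edge condition. The key observation is that, for any $(L+1)$-subset $T=\{m_\ell\}_{\ell=1}^{L+1}$ of $[M]$ with image $e_T=f(T)$, unwinding (\ref{eq:edgeDef}) shows that $e_T$ is an edge of $G(W)$ if and only if for every $y$ there is some $\ell$ with $W(y\mid f(m_\ell))=0$, which in turn holds if and only if no $y$ satisfies $T\subseteq S(y)$. Quantifying over all $(L+1)$-subsets, the assertion ``the image of every $(L+1)$-subset is an edge'' becomes ``no $y$ contains an $(L+1)$-subset of $[M]$ within $S(y)$,'' which is exactly $|S(y)|\leq L$ for all $y$. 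Combining this equivalence with the first step yields the proposition.

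The point requiring care, rather than a genuine obstacle, is the interaction between a non-injective encoder and the standing assumption that edges have cardinality at least two. When several $m_\ell$ share a codeword, $e_T$ may have fewer than $L+1$ elements, and in the extreme case $|e_T|=1$. I would observe that a singleton (or the empty set) can never satisfy (\ref{eq:edgeDef}), since $W(\cdot\mid x)$ is a probability mass function and hence $W(y\mid x)>0$ for some $y$; consequently the product form of the edge condition is automatically consistent with both directions of the equivalence, and no separate treatment of the cardinality constraint is needed. Since the translation in the second step uses $e_T$ only as a set, coincidences among the values $f(m_\ell)$ leave the reasoning intact.
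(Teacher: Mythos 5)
Your proposal is correct and follows essentially the same route as the paper: your set $S(y)$ is exactly the paper's $\mathcal{M}_y$, and both arguments reduce the existence of a valid decoder to the condition $|S(y)|\leq L$ for all $y$ and then translate that into the edge condition via (\ref{eq:edgeDef}). Your explicit handling of non-injective encoders and singleton images is a careful addition, but it does not change the substance of the argument.
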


Proposition \ref{prop:existence} reduces the existence of an
$(M,L)$ zero-error list code to the existence of an encoder with a certain
property. Because of this, henceforth we say a mapping $f\colon [M]\rightarrow\mathcal{X}$ 
is an $(M,L)$ zero-error list code for the channel $W$ if it satisfies 
the condition stated in Proposition \ref{prop:existence}. 

For each positive integer $n$, the $n$th
extension channel of $W$ is the channel 
\begin{equation*}
  \big(\mathcal{X}^n,W^n(y^n|x^n),\mathcal{Y}^n\big),
\end{equation*}
where 
\begin{equation*}
  W^n(y^n|x^n):=\prod_{t\in[n]} W(y_t|x_t). 
\end{equation*}
An $(M,L)$ zero-error list code for $W^n$ is 
referred to as an $(M,n,L)$ zero-error list code for $W$.
It is possible to show that 
the distinguishability hypergraph of $W^n$, $G(W^n)$, equals
$G^n(W)$, the $n$th co-normal power of $G(W)$ \cite{Simonyi}.
For any positive integer $n$ and any hypergraph 
$G=(\mathcal{V},\mathcal{E})$, the $n$th co-normal power of $G$,
which we denote by $G^n$, 
is defined on the set of nodes $\mathcal{V}^n$ as follows. 
For each $k\geq 2$,
the $k$-subset $e=\{v_1^n,\dots,v_k^n\}\subseteq \mathcal{V}^n$ 
is an edge of $G^n$ if for at least one $t\in [n]$, 
$\{v_{1t},\dots,v_{kt}\}\in\mathcal{E}$. This definition is motivated by
the fact that $k$ codewords are distinguishable if 
and only if their components are distinguishable in at least
one dimension. 

\section{Random Zero-Error List Codes} \label{sec:problem}

Fix a sequence of probability mass functions $(P_n(x^n))_{n=1}^\infty$,
where $P_n$ is defined over $\mathcal{X}^n$.  
Our aim here is to study the 
performance of the sequence
of random codes
$F_n\colon [M_n]\rightarrow\mathcal{X}^n$ over
the channel $(\mathcal{X},W(y|x),\mathcal{Y})$, where 
\begin{equation*}
  F_n(1),\dots,F_n(M_n)
\end{equation*} 
are $M_n$ i.i.d.\ random variables, and
\begin{equation*}
  \forall\: m\in [M_n]\colon\pr\big\{F_n(m)=x^n\big\}:=P_n(x^n). 
\end{equation*}
We seek to find conditions on the sequence 
$(M_n)_{n=1}^\infty$ such that  
\begin{equation*} 
  \lim_{n\rightarrow\infty}
  \pr\big\{F_n\text{ is an }(M_n,n,L)\text{ zero-error list code for }W
  \big\}=1.
\end{equation*}
Theorem \ref{thm:main}, below, provides the desired conditions. 
The conditions
rely on a collection of functions of the pair $(G^n(W),P_n)$, denoted
by 
\begin{equation*}
  (\theta_{L+1}^{(\ell)}(G^n(W),P_n))_{\ell=1}^{L+1},
\end{equation*}
which we next define.
 
Consider a hypergraph $G=(\mathcal{V},\mathcal{E})$. 
For any positive integer $k$, let $v_{[k]}=(v_1,\dots,v_k)$
denote an element of $\mathcal{V}^k$. For all $v_{[k]}\in\mathcal{V}^k$
and every nonempty subset $S\subseteq [k]$, define $v_S:=(v_j)_{j\in S}$.
Let $P$ be a probability mass function on $\mathcal{V}$ and set
\begin{equation*}
  P(v_S):=\prod_{j\in S} P(v_j). 
\end{equation*}
In addition, for each positive integer $k\geq 2$,
define the mapping $\sigma_k\colon\mathcal{V}^k\rightarrow 2^\mathcal{V}$
as 
\begin{equation*}
  \sigma_k(v_{[k]}):=\{v_1,\dots,v_k\}.
\end{equation*}
In words, $\sigma_k$ maps each vector 
$v_{[k]}\in\mathcal{V}^k$ to
the set containing its distinct components. 
For example, if $v_{[k]}=(v,\dots,v)$ for some $v\in\mathcal{V}$, 
then $\sigma_k(v_{[k]})=\{v\}$. 
When the value of $k$ is clear from context, 
we denote $\sigma_k$ with $\sigma$.

We next define functions of the pair $(G,P)$ that are instrumental
in characterizing the performance of random codebooks over 
channels with zero-error constraints.
For every positive integer $L$, define the quantity 
$I_{L+1}(G,P)$ as 
\begin{equation} \label{eq:defIkGP}
  I_{L+1}(G,P):=-\frac{1}{L}
  \log\sum_{v_{[L+1]}:\sigma(v_{[L+1]})\notin \mathcal{E}}P(v_{[L+1]}),
\end{equation}
where $\log$ is the binary logarithm.
Note that in (\ref{eq:defIkGP}),
\begin{equation*}
  \sum_{v_{[L+1]}:\sigma(v_{[L+1]})\notin \mathcal{E}}P(v_{[L+1]})
\end{equation*}
equals the probability of independently selecting $L+1$ vertices of $G$,
with replacement, that are indistinguishable. The negative
sign in (\ref{eq:defIkGP}) results in the nonnegativity 
of $I_{L+1}(G,P)$; division by $L$, as we show in 
Proposition \ref{prop:thetaProperties}, makes it comparable
to the R{\'e}nyi entropy of order $L+1$ \cite{Renyi},
which is defined as
\begin{equation*}
  H_{L+1}(P):=-\frac{1}{L}\log\sum_{v\in\mathcal{V}}
  \big(P(v)\big)^{L+1}.
\end{equation*}

We now define the sequence of functions
$(\theta^{(\ell)}_{L+1}(G,P))_{\ell\in [L+1]}$. This sequence
arises from the application of a second moment
bound in the proof of Theorem \ref{thm:main}
given in Subsection \ref{subsec:thmMain}. 
Set $\theta_{L+1}^{(L+1)}(G,P):=I_{L+1}(G,P)$, and
for $\ell\in [L]$, let  
\begin{equation} \label{eq:defTkGP}
  \theta_{L+1}^{(\ell)}(G,P) 
  :=2I_{L+1}(G,P) 
  +\frac{1}{L}\log
  \sum_{v_{[\ell]}}P(v_{[\ell]})\Bigg[
  \sum_{v_{[\ell+1:L+1]}:\sigma(v_{[L+1]})\notin 
  \mathcal{E}}P(v_{[\ell+1:L+1]})\Bigg]^2.
\end{equation}
The following proposition describes a number of properties that the
sequence $(\theta^{(\ell)}_{L+1}(G,P))_{\ell\in [L+1]}$ satisfies.

\begin{prop} \label{prop:thetaProperties}
For every hypergraph $G=(\mathcal{V},\mathcal{E}),$
probability mass function
$P$ on $\mathcal{V},$ and positive integer $L,$
the following statements hold.

(i) For all $\ell\in [L+1],$ 
\begin{equation*}
  0\leq\theta^{(\ell)}_{L+1}(G,P)\leq I_{L+1}(G,P).
\end{equation*}

(ii) We have 
\begin{equation*}
  0\leq I_{L+1}(G,P)\leq H_{L+1}(P).
\end{equation*}
Let $\mathrm{supp}(P)$ denote the support of $P$. 
Then
\begin{align*}
  I_{L+1}(G,P)=0&\iff\forall\: e\subseteq\mathrm{supp}(P)\colon
  \big(2\leq |e|\leq {L+1}\implies e\notin\mathcal{E}\big)\\
  I_{L+1}(G,P)=H_{L+1}(P)&\iff\forall\: e\subseteq\mathrm{supp}(P)\colon
  \big(2\leq |e|\leq {L+1}\implies e\in\mathcal{E}\big).
\end{align*}

(iii) For every positive integer $n\geq 2,$
define the probability mass function $P^n$ 
on $\mathcal{V}^n$ as
\begin{equation*}
  \forall\: v^n\in\mathcal{V}^n\colon
  P^n(v^n) := \prod_{t\in[n]} P(v_t).
\end{equation*}
Then for all $\ell\in [L+1],$
\begin{equation*}
  \theta^{(\ell)}_{L+1}(G^n,P^n)=
  n\theta^{(\ell)}_{L+1}(G,P),
\end{equation*}
where $G^n$ is the $n$th co-normal power of $G$
defined in Section \ref{sec:model}. 
\end{prop}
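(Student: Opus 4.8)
The plan is to reduce all three parts to elementary scalar inequalities plus one multiplicativity computation, after fixing convenient shorthand. Write $q:=\sum_{v_{[L+1]}:\sigma(v_{[L+1]})\notin\mathcal{E}}P(v_{[L+1]})$ for the indistinguishability probability, so that $I_{L+1}(G,P)=-\tfrac1L\log q$ with $0<q\le 1$ (positivity coming from the diagonal terms discussed below). For $\ell\in[L]$ and a fixed prefix $v_{[\ell]}$, set
\[
  f(v_{[\ell]}):=\sum_{v_{[\ell+1:L+1]}:\sigma(v_{[L+1]})\notin\mathcal{E}}P(v_{[\ell+1:L+1]}),
\]
the conditional indistinguishability probability; then $0\le f\le 1$, and summing over the prefix gives $\sum_{v_{[\ell]}}P(v_{[\ell]})f(v_{[\ell]})=q$. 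With this notation, the definition (\ref{eq:defTkGP}) reads $\theta^{(\ell)}_{L+1}(G,P)=\tfrac1L\log\!\big(\sum_{v_{[\ell]}}P(v_{[\ell]})f(v_{[\ell]})^2\big)-\tfrac2L\log q$, which I would use throughout.

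For part (i), the case $\ell=L+1$ is immediate from $I_{L+1}(G,P)\ge 0$. For $\ell\in[L]$, rewrite the above as $\tfrac1L\log\!\big(q^{-2}\sum_{v_{[\ell]}}P(v_{[\ell]})f(v_{[\ell]})^2\big)$. The lower bound follows from Jensen's inequality (equivalently Cauchy--Schwarz), which gives $\sum_{v_{[\ell]}}P(v_{[\ell]})f(v_{[\ell]})^2\ge q^2$, so the argument of the logarithm is at least $1$. The upper bound $\theta^{(\ell)}_{L+1}(G,P)\le I_{L+1}(G,P)$ is equivalent to $\sum_{v_{[\ell]}}P(v_{[\ell]})f(v_{[\ell]})^2\le q$, which holds because $0\le f\le 1$ forces $f^2\le f$ pointwise, whence $\sum_{v_{[\ell]}}P(v_{[\ell]})f(v_{[\ell]})^2\le\sum_{v_{[\ell]}}P(v_{[\ell]})f(v_{[\ell]})=q$.

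For part (ii), $I_{L+1}(G,P)\ge 0$ is just $q\le 1$. Applying the decreasing map $-\tfrac1L\log(\cdot)$, the bound $I_{L+1}(G,P)\le H_{L+1}(P)$ is equivalent to $q\ge\sum_{v}P(v)^{L+1}$; this holds because every diagonal tuple $(v,\dots,v)$ has $\sigma(v_{[L+1]})=\{v\}$, which has cardinality one and so is never an edge (edges have size at least two), hence is always counted in $q$. For the equality conditions I would observe that $q=1$ iff every positive-probability tuple is indistinguishable, and $q=\sum_v P(v)^{L+1}$ iff the only indistinguishable positive-probability tuples are the diagonal ones. Since every $e\subseteq\mathrm{supp}(P)$ with $2\le|e|\le L+1$ arises as $\sigma(v_{[L+1]})$ for some positive-probability tuple (pad $e$ with repetitions to length $L+1$), and conversely $\sigma$ of any positive-probability tuple lies in $\mathrm{supp}(P)$ with size between $1$ and $L+1$, these statements translate directly into the stated edge conditions (the singleton case being harmless since singletons are never edges).

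Part (iii) is the substantive step, and the key is the structure of the co-normal power. I would first show that a tuple of vectors $(v_1^n,\dots,v_{L+1}^n)$ has $\sigma\notin\mathcal{E}(G^n)$ if and only if for every coordinate $t\in[n]$ the column $(v_{1t},\dots,v_{(L+1)t})$ has $\sigma\notin\mathcal{E}(G)$; this is the co-normal edge definition together with the fact that projecting the set of distinct vectors to coordinate $t$ yields the same set of values as projecting the full tuple. Because $P^n$ factors across coordinates and the constraint is now a conjunction of per-coordinate constraints, both $q$ and $f$ factor over $t$: one gets $q(G^n,P^n)=q(G,P)^n$ (which gives additivity of $I_{L+1}$, i.e. the $\ell=L+1$ case) and $f(v_{[\ell]}^n)=\prod_{t}f(v_{1t},\dots,v_{\ell t})$. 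Substituting the latter into the reformulated $\theta^{(\ell)}$ and regrouping the $\ell\times n$ summation variables into $n$ independent columns in $\mathcal{V}^\ell$ yields $\sum_{v_{[\ell]}^n}P^n(v_{[\ell]}^n)f(v_{[\ell]}^n)^2=\big(\sum_{v_{[\ell]}}P(v_{[\ell]})f(v_{[\ell]})^2\big)^n$; combining this with $q(G^n,P^n)=q(G,P)^n$ gives $\theta^{(\ell)}_{L+1}(G^n,P^n)=n\,\theta^{(\ell)}_{L+1}(G,P)$. The main obstacle is bookkeeping: confirming that the squaring in (\ref{eq:defTkGP}) does not obstruct the factorization, which works precisely because the square of a product over coordinates is the product of the squares and the outer prefix weight factors over the same coordinate decomposition.
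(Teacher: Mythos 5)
Your proposal is correct and follows essentially the same route as the paper: Jensen/Cauchy--Schwarz plus the pointwise bound $f^2\le f$ for part (i), the diagonal-tuple and singleton-is-not-an-edge observations for part (ii), and the per-coordinate factorization of the indistinguishability constraint under the co-normal power for part (iii). The only cosmetic difference is that you factor $f$ first and square the product, whereas the paper expands the square into a double sum before factoring; these are the same computation.
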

For $L=1,$ $I_{L+1}(G,P)$ has further properties which
we discuss in Appendix A.

We next state our main result
which provides upper and lower bounds on the cardinality
of a randomly generated codebook that has zero error. 

\begin{thm} \label{thm:main}
Consider a channel $(\mathcal{X},W(y|x),\mathcal{Y})$ 
and a sequence of probability mass functions $(P_n(x^n))_{n=1}^\infty$.
If 
\begin{equation} \label{eq:rateIPG}
  \lim_{n\rightarrow\infty} M_n^{L+1}2^{-LI_{L+1}(G^n(W),P_n)}=0,
\end{equation}
then 
\begin{equation} \label{eq:limit}
  \lim_{n\rightarrow\infty}
  \pr\big\{F_n\text{ is an }(M_n,n,L)\text{ zero-error list code}
  \big\}=1.
\end{equation}
Conversely, assuming (\ref{eq:limit}), then
for some $\ell\in [L+1]$, 
\begin{equation} \label{eq:thetaThm}
  \lim_{n\rightarrow\infty}M_n^\ell 
  2^{-L\theta_{L+1}^{(\ell)}(G^n(W),P_n)}=0.
\end{equation}
\end{thm}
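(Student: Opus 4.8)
The plan is to let $X_n$ count the number of \emph{bad} $(L+1)$-subsets of messages, i.e.\ those $\{m_1,\dots,m_{L+1}\}\in\binom{[M_n]}{L+1}$ whose codewords are indistinguishable, $\sigma(F_n(m_1),\dots,F_n(m_{L+1}))\notin\mathcal{E}$ in $G^n(W)$. By Proposition \ref{prop:existence}, $F_n$ is an $(M_n,n,L)$ zero-error list code precisely when $X_n=0$, so (\ref{eq:limit}) is the statement $\pr\{X_n=0\}\to 1$. Since the codewords at distinct indices are i.i.d.\ $\sim P_n$, each bad-subset indicator has probability exactly $q_n:=2^{-LI_{L+1}(G^n(W),P_n)}$, whence $\mathbb{E}[X_n]=\binom{M_n}{L+1}q_n$. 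For the direct part I would apply the first moment method: Markov's inequality gives $\pr\{X_n\ge 1\}\le\mathbb{E}[X_n]\le \frac{1}{(L+1)!}M_n^{L+1}2^{-LI_{L+1}(G^n(W),P_n)}$, which tends to $0$ under (\ref{eq:rateIPG}), so $\pr\{X_n=0\}\to 1$.

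For the converse I would use a second moment (Paley--Zygmund) argument. Writing $A_e$ for the event that the subset $e$ is bad, Cauchy--Schwarz yields $\pr\{X_n\ge 1\}\ge (\mathbb{E}[X_n])^2/\mathbb{E}[X_n^2]$, so the hypothesis $\pr\{X_n\ge1\}\to0$ forces $\mathbb{E}[X_n^2]/(\mathbb{E}[X_n])^2\to\infty$. The key computation is to expand $\mathbb{E}[X_n^2]=\sum_{e,e'}\pr\{A_e\cap A_{e'}\}$ according to the intersection size $\ell=|e\cap e'|\in\{0,\dots,L+1\}$. Conditioning on the $\ell$ shared codewords $v_{[\ell]}$, the remaining coordinates of $e$ and of $e'$ are independent, so $\pr\{A_e\cap A_{e'}\mid v_{[\ell]}\}$ factors as the square of the conditional bad-probability $\sum_{v_{[\ell+1:L+1]}:\,\sigma(v_{[L+1]})\notin\mathcal{E}}P_n(v_{[\ell+1:L+1]})$; averaging over $v_{[\ell]}$ reproduces exactly the bracketed sum in (\ref{eq:defTkGP}), so that $\pr\{A_e\cap A_{e'}\}=q_n^2\,2^{L\theta_{L+1}^{(\ell)}(G^n(W),P_n)}$ for every pair with $|e\cap e'|=\ell\ge1$, and $=q_n^2$ for $\ell=0$. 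Recognizing that this identification of the conditional second moment with the definition of $\theta_{L+1}^{(\ell)}$ is the heart of the argument, I would then count pairs: there are $\binom{M_n}{L+1}\binom{L+1}{\ell}\binom{M_n-(L+1)}{L+1-\ell}\sim C_\ell\binom{M_n}{L+1}^2 M_n^{-\ell}$ ordered pairs with intersection $\ell$, for positive constants $C_\ell$ depending only on $L$.

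Dividing by $(\mathbb{E}[X_n])^2=\binom{M_n}{L+1}^2q_n^2$, the $\ell=0$ term tends to $1$ and the rest gives $\mathbb{E}[X_n^2]/(\mathbb{E}[X_n])^2 = (1+o(1))+\sum_{\ell=1}^{L+1}(C_\ell+o(1))\,M_n^{-\ell}2^{L\theta_{L+1}^{(\ell)}(G^n(W),P_n)}$. Since the left side diverges while the $\ell=0$ contribution stays bounded, at least one summand must diverge, so $\max_{\ell\in[L+1]}M_n^{-\ell}2^{L\theta_{L+1}^{(\ell)}(G^n(W),P_n)}\to\infty$, equivalently $\min_{\ell\in[L+1]}M_n^{\ell}2^{-L\theta_{L+1}^{(\ell)}(G^n(W),P_n)}\to0$. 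Because $[L+1]$ is finite, some fixed index $\ell$ attains this minimum infinitely often, yielding (\ref{eq:thetaThm}) for that $\ell$, after passing to a subsequence or directly whenever the quantities converge on the exponential scale, as in the product setting covered by Proposition \ref{prop:thetaProperties}(iii).

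I expect this final extraction of a single $\ell$ to be the main obstacle. The second moment only controls the \emph{minimum} over $\ell$, and in principle the minimizing index could oscillate with $n$, so promoting $\min_\ell M_n^{\ell}2^{-L\theta^{(\ell)}}\to0$ to the full limit (\ref{eq:thetaThm}) for one fixed $\ell$ is the delicate point the argument must settle; a secondary care point is verifying that the $o(1)$ corrections in the pair counts do not swamp the genuinely divergent term, which I would handle using the bound $A_\ell\ge q_n^2$ (equivalently $\theta_{L+1}^{(\ell)}\ge 0$ from Proposition \ref{prop:thetaProperties}(i)) to keep every coefficient bounded below by a positive constant.
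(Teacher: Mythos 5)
Your proposal follows essentially the same route as the paper: a first-moment (Markov) bound for the direct part and a Cauchy--Schwarz second-moment bound for the converse, with $\mathbb{E}[Z^2]$ expanded by intersection size $\ell$, the conditional factorization identified with the bracketed sum in (\ref{eq:defTkGP}), and the diagonal term absorbed as $\ell=L+1$ via $\theta^{(L+1)}_{L+1}=I_{L+1}$. The one delicate point you flag---that the second-moment bound only forces $\min_{\ell\in[L+1]} M_n^{\ell}2^{-L\theta^{(\ell)}_{L+1}}\to 0$ while the minimizing index may oscillate with $n$---is not resolved in the paper's proof either, which passes directly from the divergence of the finite sum to the stated conclusion for a single fixed $\ell$.
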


In Theorem \ref{thm:main}, if a channel $W$ 
and a sequence of probability mass functions 
$(P_n)_{n=1}^\infty$ satisfy
\begin{equation} \label{eq:equality}
  \max_{\ell\in [L+1]}\frac{1}{\ell}
  \theta_{L+1}^{(\ell)}(G^n(W),P_n)=
  \frac{1}{L+1}I_{L+1}(G^n(W),P_n),
\end{equation}
for sufficiently large $n$, 
then (\ref{eq:rateIPG}), in addition to
being sufficient for (\ref{eq:limit}), is necessary as
well. In the next corollary, we present a sufficient condition
under which (\ref{eq:equality}) holds. To describe this condition
precisely, we require the next definition.

%
%
%

Consider a hypergraph $G=(\mathcal{V},\mathcal{E})$ and a probability mass function
$P$ on $\mathcal{V}$. Let $\mathcal{V}_P=\mathrm{supp}(P)$
and $\mathcal{E}_P\subseteq\mathcal{E}$ be the set of all
edges whose vertices lie in $\mathcal{V}_P$. We then refer to 
the hypergraph $G_P\coloneqq (\mathcal{V}_P,\mathcal{E}_P)$ as the 
subhypergraph of $G$ induced by $P$. 
For a fixed $n$, a sufficient condition for (\ref{eq:equality}) 
to hold is for the subhypergraph of $G^n(W)$ induced 
by $P_n$ be complete multipartite. (Recall definition from Section
\ref{sec:model}.) This results in the 
next corollary. The proof of this corollary,
together with the sufficient condition for (\ref{eq:equality}), appears in
Subsection \ref{subsec:equality}. 

\begin{cor} \label{cor:equality}
Consider a channel $(\mathcal{X},W(y|x),\mathcal{Y})$
and a sequence of probability mass functions $(P_n(x^n))_{n=1}^\infty$.  
If for sufficiently large $n$, the subhypergraph of $G^n(W)$
induced by $P_n$ is complete multipartite, then 
\begin{align*}
  \MoveEqLeft
  \lim_{n\rightarrow\infty}
  \pr\big\{F_n\text{ is an }(M_n,n,L)\text{ zero-error list code}
  \big\}=1\\
  &\iff 
  \lim_{n\rightarrow\infty} M_n^{L+1}2^{-LI_{L+1}(G^n(W),P_n)}=0.
\end{align*}
\end{cor}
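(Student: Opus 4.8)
The plan is to deduce the corollary from Theorem \ref{thm:main} by showing that the complete multipartite hypothesis forces the identity (\ref{eq:equality}) to hold for every large $n$; once (\ref{eq:equality}) is available, the two limits in the corollary become equivalent. The forward implication ($\Leftarrow$) is immediate and needs no structural hypothesis, since (\ref{eq:rateIPG}) is exactly the assumption of the direct part of Theorem \ref{thm:main}, which yields (\ref{eq:limit}). The content therefore lies in the converse: I would start from (\ref{eq:limit}), apply the converse part of Theorem \ref{thm:main} to produce some $\ell\in[L+1]$ with $M_n^\ell 2^{-L\theta_{L+1}^{(\ell)}(G^n(W),P_n)}\to 0$, and then upgrade this to (\ref{eq:rateIPG}) using the bound $\theta_{L+1}^{(\ell)}\le\tfrac{\ell}{L+1}I_{L+1}$ encoded in (\ref{eq:equality}).

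To establish (\ref{eq:equality}) at a fixed $n$, I would fix the complete multipartite partition $\{\mathcal{I}_j\}_{j=1}^k$ of $\mathrm{supp}(P_n)$ supplied by the hypothesis and set $p_j:=P_n(\mathcal{I}_j)$, so that $\sum_j p_j=1$. The crucial structural fact is that in a complete multipartite hypergraph a set of distinct vertices fails to be an edge precisely when it is contained in a single class $\mathcal{I}_j$; hence a draw $v_{[L+1]}$ satisfies $\sigma(v_{[L+1]})\notin\mathcal{E}$ if and only if $v_1,\dots,v_{L+1}$ all fall in one class (a singleton is trivially not an edge, so this covers repeated draws as well). Substituting into (\ref{eq:defIkGP}) gives $2^{-LI_{L+1}(G^n(W),P_n)}=\sum_j p_j^{L+1}$, and the same observation evaluates the inner bracket of (\ref{eq:defTkGP}): for $v_{[\ell]}\subseteq\mathcal{I}_j$ the bracket equals $p_j^{L+1-\ell}$, and it vanishes otherwise. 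Performing the outer sum in (\ref{eq:defTkGP}) then collapses everything to the power sums $s_r:=\sum_j p_j^{r}$, yielding the closed form $\theta_{L+1}^{(\ell)}(G^n(W),P_n)=\tfrac1L\log\bigl(s_{2L+2-\ell}/s_{L+1}^2\bigr)$ for $\ell\in[L]$.

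With these formulas, the inequality $\tfrac1\ell\theta_{L+1}^{(\ell)}\le\tfrac1{L+1}I_{L+1}$ (which, together with the equality at $\ell=L+1$ coming from $\theta_{L+1}^{(L+1)}=I_{L+1}$, is exactly (\ref{eq:equality})) reduces after rearrangement to $s_{2L+2-\ell}^{1/(2L+2-\ell)}\le s_{L+1}^{1/(L+1)}$. Since $2L+2-\ell>L+1\ge 1$ for $\ell\in[L]$, this is precisely the nonincreasing monotonicity of the map $r\mapsto s_r^{1/r}$ applied to the probability vector $(p_j)_{j=1}^k$, equivalently the monotonicity of the Rényi entropy of $(p_j)_j$ in its order. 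This delivers (\ref{eq:equality}) for every sufficiently large $n$. To close the converse, I would combine $\theta_{L+1}^{(\ell)}\le\tfrac{\ell}{L+1}I_{L+1}$ with the convergence from Theorem \ref{thm:main}: this bound gives $\bigl(M_n^\ell 2^{-L\theta_{L+1}^{(\ell)}}\bigr)^{(L+1)/\ell}\ge M_n^{L+1}2^{-LI_{L+1}}\ge 0$, and since the left side tends to $0$ (being a fixed positive power of a null sequence), the squeeze forces $M_n^{L+1}2^{-LI_{L+1}}\to 0$, which is (\ref{eq:rateIPG}).

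The routine part is the bookkeeping in evaluating (\ref{eq:defTkGP}); the step I expect to carry the real weight is the structural translation ``indistinguishable $\iff$ contained in one class,'' which is what makes the indistinguishability probabilities factor into the power sums $s_r$ and thereby exposes the problem as an instance of norm (Rényi) monotonicity. A minor but necessary point is that the partition, and hence the entire computation, may be chosen afresh at each $n$, since $P_n$ and its induced subhypergraph vary with $n$; this is harmless because (\ref{eq:equality}) is only required separately for each sufficiently large $n$.
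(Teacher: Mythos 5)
Your proposal is correct and follows essentially the same route as the paper: reduce the corollary to showing that the complete multipartite hypothesis forces (\ref{eq:equality}), compute $I_{L+1}$ and the $\theta_{L+1}^{(\ell)}$ in terms of the power sums of the class probabilities (equivalently, R\'enyi entropies of the induced distribution on classes), and conclude by monotonicity of $r\mapsto s_r^{1/r}$. The only difference is cosmetic — the paper writes the closed form via $H_{2L+2-\ell}(P^*)$ rather than via $s_{2L+2-\ell}/s_{L+1}^2$ — and your explicit squeeze at the end is a correct fleshing-out of the step the paper leaves implicit.
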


One scenario where the sufficient condition of Corollary
\ref{cor:equality} holds automatically for all $n\geq 1$ is when
$G(W)$ is complete multipartite. This is stated in the next 
lemma.

\begin{lem} \label{lem:completeMultipartite}
If $G$ is a complete multipartite hypergraph, then
for all $n\geq 2$, so is $G^n$. 
\end{lem}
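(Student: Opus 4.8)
The plan is to prove the statement directly, without induction, by exhibiting an explicit complete multipartite structure on $G^n$ built from the components of $G$. Let $\{\mathcal{I}_j\}_{j=1}^k$ be the partition of $\mathcal{V}$ witnessing that $G$ is complete multipartite. For each index tuple $j_{[n]}=(j_1,\dots,j_n)\in[k]^n$, I would form the product set $\mathcal{I}_{j_{[n]}}:=\mathcal{I}_{j_1}\times\cdots\times\mathcal{I}_{j_n}\subseteq\mathcal{V}^n$ and propose the family $\{\mathcal{I}_{j_{[n]}}\}_{j_{[n]}\in[k]^n}$ as the partition witnessing that $G^n$ is complete multipartite. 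Since the $\mathcal{I}_j$ partition $\mathcal{V}$, these product sets partition $\mathcal{V}^n$, so it remains to verify the two defining properties: each product set is an independent set of $G^n$, and every subset of $\mathcal{V}^n$ is either an edge of $G^n$ or lies inside one product set.

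First I would check independence. Fix $j_{[n]}$ and any subset $e=\{v_1^n,\dots,v_m^n\}$ of $\mathcal{I}_{j_{[n]}}$ with $m\geq 2$. For each coordinate $t\in[n]$, every component $v_{it}$ lies in $\mathcal{I}_{j_t}$, so the projected set $\{v_{1t},\dots,v_{mt}\}$ is a subset of the independent set $\mathcal{I}_{j_t}$ and therefore is not in $\mathcal{E}$. As no coordinate yields an edge of $G$, the definition of the co-normal power gives $e\notin\mathcal{E}(G^n)$, so $\mathcal{I}_{j_{[n]}}$ is independent.

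Next I would establish the covering dichotomy. The key observation is that a subset $e=\{v_1^n,\dots,v_m^n\}$ is contained in some product set if and only if, for every coordinate $t$, the projected set $A_t:=\{v_{1t},\dots,v_{mt}\}$ is contained in a single component $\mathcal{I}_{j_t}$ of $G$; this is immediate by reading the containment coordinatewise. Hence, if $e$ lies in no product set, there is a coordinate $t$ for which $A_t$ is contained in no single $\mathcal{I}_j$. Applying the complete multipartite property of $G$ to the subset $A_t\subseteq\mathcal{V}$ then forces $A_t\in\mathcal{E}$, since the only alternative, $A_t\subseteq\mathcal{I}_j$, is excluded; note this also guarantees $|A_t|\geq 2$, because any singleton automatically lies in some component. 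Because the projection in coordinate $t$ is an edge of $G$, the co-normal power definition yields $e\in\mathcal{E}(G^n)$, completing the dichotomy and hence the proof that $G^n$ is complete multipartite.

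The argument is essentially mechanical, and I do not anticipate a serious obstacle; the one point requiring care is the interaction between the set-valued projection and repeated coordinates. A $k$-subset of $\mathcal{V}^n$ consists of distinct vectors, yet their projections onto a fixed coordinate may coincide, so $A_t$ can have fewer than $m$ elements. This is harmless because both the co-normal power condition and the complete multipartite property are phrased in terms of the projected \emph{set} rather than the tuple, and a singleton $A_t$ is always contained in a component and hence never the obstruction. Since the reasoning applies verbatim for every $n\geq 1$, the claim for $n\geq 2$ follows at once.
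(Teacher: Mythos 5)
Your proposal is correct and follows essentially the same route as the paper's proof: both exhibit the product sets $\mathcal{I}_{j_1}\times\cdots\times\mathcal{I}_{j_n}$ as the candidate partition and reduce the edge/containment dichotomy in $G^n$ to the coordinatewise dichotomy in $G$ via the co-normal power definition. Your version is somewhat more explicit about verifying independence of each block and about the singleton-projection subtlety, but the underlying argument is the same.
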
 

In the case where $G(W)$ is not complete multipartite, in order
to obtain a simpler version of Theorem \ref{thm:main}, we  
assume that the codebook distribution is not only i.i.d.\
across messages, but also over time. In addition, we assume that
the message set cardinality grows exponentially in the blocklength.
Formally, we fix a probability mass function $P$ on $\mathcal{X}$ and
a rate $R\geq 0$. Then, in Theorem \ref{thm:main}, by setting $P_n:=P^n$ 
and $M_n:=\lfloor 2^{nR}\rfloor$ for all positive integers
$n$, and applying Parts (i) and (iii) of Proposition
\ref{prop:thetaProperties}, we get the following corollary.
\begin{cor} \label{cor:IIDcase}
Consider a channel $(\mathcal{X},W(y|x),\mathcal{Y})$ 
and a probability mass function $P$ on $\mathcal{X}$.
If 
\begin{equation*}  
  R<\frac{L}{L+1}I_{L+1}(G,P),
\end{equation*}
then 
\begin{equation} \label{eq:limit2}
  \lim_{n\rightarrow\infty}
  \pr\big\{F_n\text{ is an }(2^{nR},n,L)\text{ zero-error list code for }
  W \big\}=1.
\end{equation}
Conversely, if (\ref{eq:limit2}) holds, then
\begin{equation} \label{eq:upperBoundLIL}
  R< LI_{L+1}(G,P).
\end{equation}
\end{cor}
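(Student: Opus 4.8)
The plan is to derive Corollary \ref{cor:IIDcase} directly from Theorem \ref{thm:main} by specializing to $P_n := P^n$ and $M_n := \lfloor 2^{nR}\rfloor$, and then simplifying the resulting limit conditions using Parts (i) and (iii) of Proposition \ref{prop:thetaProperties}. The key observation is that Part (iii) gives the single-letterization $\theta^{(\ell)}_{L+1}(G^n,P^n) = n\,\theta^{(\ell)}_{L+1}(G,P)$ for every $\ell$, and in particular $I_{L+1}(G^n,P^n) = n\,I_{L+1}(G,P)$ since $I_{L+1} = \theta^{(L+1)}_{L+1}$. This turns the exponents in (\ref{eq:rateIPG}) and (\ref{eq:thetaThm}) into purely geometric sequences, so that the limits reduce to comparing $R$ against the relevant per-symbol rate.

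For the forward (achievability) direction, I would substitute into the sufficient condition (\ref{eq:rateIPG}):
\begin{equation*}
  M_n^{L+1} 2^{-L I_{L+1}(G^n,P^n)}
  \leq 2^{(L+1)nR} \, 2^{-nL I_{L+1}(G,P)}
  = 2^{n\,[(L+1)R - L I_{L+1}(G,P)]}.
\end{equation*}
This tends to $0$ exactly when $(L+1)R < L\,I_{L+1}(G,P)$, i.e. $R < \frac{L}{L+1} I_{L+1}(G,P)$, which is the stated hypothesis. Hence Theorem \ref{thm:main} yields (\ref{eq:limit2}).

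For the converse, I would assume (\ref{eq:limit2}) and apply the converse half of Theorem \ref{thm:main}, which guarantees that (\ref{eq:thetaThm}) holds for some $\ell \in [L+1]$. After single-letterizing via Part (iii), the $\ell$-th condition reads
\begin{equation*}
  M_n^{\ell} 2^{-L\theta^{(\ell)}_{L+1}(G^n,P^n)}
  \approx 2^{n\,[\ell R - L\theta^{(\ell)}_{L+1}(G,P)]} \to 0,
\end{equation*}
which forces $\ell R < L\,\theta^{(\ell)}_{L+1}(G,P)$, hence $R < \frac{L}{\ell}\theta^{(\ell)}_{L+1}(G,P)$. To conclude the claimed bound $R < L\,I_{L+1}(G,P)$, I would invoke Part (i), namely $\theta^{(\ell)}_{L+1}(G,P) \leq I_{L+1}(G,P)$, together with $\ell \geq 1$, giving $\frac{1}{\ell}\theta^{(\ell)}_{L+1}(G,P) \leq I_{L+1}(G,P)$, so that $R < \ell \cdot \frac{L}{\ell} I_{L+1}(G,P) = L\,I_{L+1}(G,P)$.

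The main technical care — and the step most prone to error — is handling the floor function $M_n = \lfloor 2^{nR} \rfloor$ rigorously rather than treating $M_n$ as exactly $2^{nR}$. Since $2^{nR} - 1 \leq M_n \leq 2^{nR}$, the upper bound $M_n \leq 2^{nR}$ suffices for achievability, and for the converse one checks that $M_n^{\ell} = 2^{n\ell R}(1 + o(1))$ (or more simply $M_n \geq 2^{nR-1}$ eventually, since $R$ can be assumed positive in the nontrivial case) so that the sign of the exponent $\ell R - L\theta^{(\ell)}_{L+1}(G,P)$ still determines whether the limit is zero. I expect the only genuine subtlety to be confirming that these multiplicative $(1+o(1))$ factors do not affect the strict inequalities; everything else is a direct substitution into Theorem \ref{thm:main}.
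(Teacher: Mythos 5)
Your proposal is correct and follows exactly the route the paper intends: specialize Theorem \ref{thm:main} to $P_n=P^n$ and $M_n=\lfloor 2^{nR}\rfloor$, single-letterize the exponents via Part (iii) of Proposition \ref{prop:thetaProperties}, and use Part (i) together with $\ell\geq 1$ to weaken the converse bound to $R<LI_{L+1}(G,P)$. The paper gives no further detail beyond this, so your write-up (including the care with the floor function) is if anything slightly more explicit than the original.
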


Note that in Corollary \ref{cor:IIDcase},
if $G(W)$ is complete multipartite, as in 
the next example, then using Corollary \ref{cor:equality},
the upper bound (\ref{eq:upperBoundLIL})
can be improved to 
\begin{equation*}  
  R<\frac{L}{L+1}I_{L+1}(G,P).
\end{equation*}

We next apply Corollary \ref{cor:equality} to the 
identity channel $W=(\mathcal{X},\mathbf{1}\{y=x\},\mathcal{X})$.
Per the informal discussion in the Introduction, applying our
result to this channel gives the exact asymptotic behavior of
the probability of coinciding birthdays in the birthday problem. 
We now formalize this connection.

Note that every subset $e$
of $\mathcal{X}$ with $|e|\geq 2$ is an edge of $G(W)$. 
Thus for $n\geq 2$, every $e\subseteq\mathcal{X}^n$
with $|e|\geq 2$ is an edge of $G^n(W)$. 
Therefore, for distinct messages $m_1,\dots, m_{L+1}\in [M_n]$, 
we have $F_n(m_1)=\dots=F_n(m_{L+1})$
if and only if
\begin{equation*}
  \big(F_n(m_1),\dots,F_n(m_{L+1})\big)
  \text{ is not an edge in }G^n(W).
\end{equation*}
Hence Proposition \ref{prop:existence} implies that 
(\ref{eq:injective}) holds if and only if 
\begin{equation} \label{eq:bDayLimit}
  \lim_{n\rightarrow\infty}
  \pr\big\{F_n\text{ is an }(M_n,n,L)\text{ zero-error list code}
  \big\}=1.
\end{equation}
Now from Corollary \ref{cor:equality}, it follows that (\ref{eq:bDayLimit})
holds if and only if 
\begin{equation} \label{eq:OurResultBDay}
  \lim_{n\rightarrow\infty} M_n^{L+1} 2^{-LH_{L+1}(P_n)}=0.
\end{equation} 
This proves the next corollary.

\begin{cor} \label{cor:BDayProblem}
Fix an integer $L\geq 1$, a finite set $\mathcal{X}$, and a sequence of 
probability mass functions $(P_n(x^n))_{n=1}^\infty$. For each $n$, let
$F_n\colon [M_n]\rightarrow\mathcal{X}^n$ be a random mapping 
with i.i.d.\ values and distribution $P_n(x^n)$; that is, 
\begin{equation*}
  \forall\: m\in [M_n]\colon
  \pr\big\{F_n(m)=x^n\big\}=P_n(x^n).
\end{equation*}
Then we have 
\begin{equation} \label{eq:injective}
  \lim_{n\rightarrow\infty}
  \pr\Big\{\exists\: m_1,\dots,m_{L+1}\in [M_n]\colon
  F_n(m_1)=\dots=F_n(m_{L+1})\Big\}=0
\end{equation}
if and only if
\begin{equation*}
  \lim_{n\rightarrow\infty} M_n^{L+1} 2^{-LH_{L+1}(P_n)}=0.
\end{equation*} 
\end{cor}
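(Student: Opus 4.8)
The plan is to specialize Theorem~\ref{thm:main}, through Corollary~\ref{cor:equality}, to the identity channel, for which the distinguishability hypergraph has an especially simple structure, and then to observe that in this case $I_{L+1}$ collapses to the R\'enyi entropy $H_{L+1}$. Everything reduces to three short reductions, so I expect no genuine obstacle; the only point needing care is the identification of the non-edges of $G^n(W)$.

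First I would pin down the structure of $G(W)$ for $W(y|x)=\mathbf{1}\{y=x\}$. From the edge condition (\ref{eq:edgeDef}), a set $e\subseteq\mathcal{X}$ is an edge iff $\prod_{x\in e}\mathbf{1}\{y=x\}=0$ for every $y$; this product is nonzero only when $e=\{y\}$, so every $e$ with $|e|\geq 2$ is an edge. Passing to the co-normal power, a $k$-subset of $\mathcal{X}^n$ is an edge of $G^n(W)$ iff its members differ in some coordinate, which holds automatically for distinct vectors. Hence for every $n\geq 1$ every subset of $\mathcal{X}^n$ of size at least two is an edge of $G^n(W)$, and in particular $G^n(W)$---and therefore the subhypergraph it induces on $\mathrm{supp}(P_n)$---is complete multipartite, with the singletons as its independent components.

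Next I would translate the collision event into the zero-error list-code condition. For distinct $m_1,\dots,m_{L+1}$, the tuple $(F_n(m_1),\dots,F_n(m_{L+1}))$ fails to be an edge of $G^n(W)$ precisely when its distinct components form a singleton, that is, when $F_n(m_1)=\dots=F_n(m_{L+1})$. By Proposition~\ref{prop:existence}, $F_n$ is an $(M_n,n,L)$ zero-error list code iff the image of every $(L+1)$-subset of $[M_n]$ is an edge; thus the event in (\ref{eq:injective}) is exactly the complement of the event that $F_n$ is a zero-error list code, and (\ref{eq:injective}) is equivalent to (\ref{eq:bDayLimit}). Since the induced subhypergraph is complete multipartite, Corollary~\ref{cor:equality} applies and shows that (\ref{eq:bDayLimit}) holds iff $\lim_{n\to\infty}M_n^{L+1}2^{-L I_{L+1}(G^n(W),P_n)}=0$.

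The final step, the only place requiring a brief computation, is to evaluate $I_{L+1}(G^n(W),P_n)$. From the definition (\ref{eq:defIkGP}), the sum runs over those $v_{[L+1]}$ whose distinct entries are not an edge; by the first step this forces $v_1=\dots=v_{L+1}$, so the sum equals $\sum_{v}P_n(v)^{L+1}$ and hence $I_{L+1}(G^n(W),P_n)=H_{L+1}(P_n)$. (Equivalently, this is the equality case in Part~(ii) of Proposition~\ref{prop:thetaProperties}.) Substituting into the criterion above yields the stated equivalence. The argument is a clean specialization, and the one subtlety is verifying that the non-edges of $G^n(W)$ meeting the collision event are exactly the ``all equal'' tuples, which is precisely what makes $I_{L+1}$ coincide with the R\'enyi entropy.
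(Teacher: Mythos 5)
Your proposal is correct and follows essentially the same route as the paper's: identify that every subset of $\mathcal{X}^n$ of size at least two is an edge of $G^n(W)$, equate the collision event with the failure of the zero-error list-code condition via Proposition~\ref{prop:existence}, and invoke Corollary~\ref{cor:equality} together with the identity $I_{L+1}(G^n(W),P_n)=H_{L+1}(P_n)$. The extra justifications you supply (the complete-multipartite structure with singleton independent components, and the equality case of Part~(ii) of Proposition~\ref{prop:thetaProperties}) are correct and simply make explicit what the paper leaves implicit.
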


In words, to guarantee the absence of collisions of $(L+1)$-th order, 
the population size $M_n$ must be negligible compared to 
$2^{\frac{L}{L+1}H_{L+1}(P_n)}$. 

\section{Proofs} \label{sec:proofs}

In this section, we provide detailed proofs of our results.

\subsection{Proof of Proposition \ref{prop:distinguishabilityGraphsCharacterization}}
For each of the two cases, one direction is proved before the statement of
the proposition in Section \ref{sec:model}. Here we prove the reverse direction
of each case.

Suppose $G=(\mathcal{V},\mathcal{E})$ is a hypergraph where the superset of every edge
is an edge. We define a channel 
\begin{equation*}
  \big(\mathcal{X},W(y|x),\mathcal{Y}\big)
\end{equation*}
such that $G=G(W)$. Set
\begin{align*}
  \mathcal{X} &\coloneqq  \mathcal{V}\\
  \mathcal{Y} &\coloneqq  2^\mathcal{V}\setminus\mathcal{E}.
\end{align*}
Note that $\mathcal{Y}$ is not empty, since by definition, each 
edge has cardinality at least two. Define $W$ as
\begin{equation*}
  W(y|x)\coloneqq 
  \frac{\mathbf{1}\{x\in y\}}{|\{\bar y\in\mathcal{Y}:x\in \bar y\}|}.
\end{equation*}
Then for every subset $e\subseteq\mathcal{X}$ and every $y\in\mathcal{Y}$,
\begin{equation} \label{eq:productNotZero}
  \prod_{x\in e} W(y|x)
  =\prod_{x\in e}
  \frac{\mathbf{1}\{x\in y\}}{|\{\bar y\in\mathcal{Y}:x\in \bar y\}|}
  \neq 0
\end{equation}
if and only if $e\subseteq y$. Since by definition of $\mathcal{Y}$,
$y$ is not an edge, and by assumption, the superset of every edge is an edge,
(\ref{eq:productNotZero}) holds for some $y\in\mathcal{Y}$ if and only if 
$e\notin\mathcal{E}$. Thus $G=G(W)$.

Next assume $G=(\mathcal{V},\mathcal{E})$ is complete multipartite;
that is, there exists a partition $\{\mathcal{I}_j\}_{j=1}^k$ of $\mathcal{V}$
such that each $\mathcal{I}_j$ is an independent set, and
for every subset $e\subseteq \mathcal{V}$, 
either $e\in\mathcal{E}$, or $e\subseteq\mathcal{I}_j$ for some
$1\leq j\leq k$. For this hypergraph, we define a
\emph{deterministic} channel
\begin{equation*}
  \big(\mathcal{X},W(y|x),\mathcal{Y}\big)
\end{equation*}
such that $G=G(W)$. Set
\begin{align*}
  \mathcal{X} &\coloneqq  \mathcal{V}
  =\bigcup_{j\in [k]}\mathcal{I}_j\\
  \mathcal{Y} &\coloneqq [k],
\end{align*}
and define $W$ as
\begin{equation*}
  W(y|x)\coloneqq \mathbf{1}\{x\in \mathcal{I}_y\}.
\end{equation*}
Then for every subset $e\subseteq\mathcal{X}$ and every $y\in\mathcal{Y}$,
\begin{equation} \label{eq:productNotZeroDetCh}
  \prod_{x\in e} W(y|x)
  =\prod_{x\in e} \mathbf{1}\{x\in \mathcal{I}_y\}
  \neq 0
\end{equation}
if and only if $e\subseteq \mathcal{I}_y$. By assumption, however,
every $e\in\mathcal{E}$ is either in $\mathcal{E}$ or is a subset of an
independent set $\mathcal{I}_y$ for some $y\in\mathcal{Y}$. Thus
(\ref{eq:productNotZeroDetCh}) holds for some $y\in\mathcal{Y}$ if and
only if $e\notin\mathcal{E}$. This completes the proof.

\subsection{Proof of Proposition \ref{prop:existence}} 
\label{subsec:existence}
Let $(f,g)$ be an $(M,L)$ zero-error list
code for channel $W$. If $[M]$ has a subset of cardinality $L+1$, 
say $\{m_\ell\}_{\ell=1}^{L+1}$, such that
$\{f(m_\ell)\}_{\ell=1}^{L+1}$ is not an edge in $G(W)$,
then for some $y\in\mathcal{Y}$,
\begin{equation*}
  \prod_{\ell\in [L+1]}
  W(y|f(m_\ell))>0.
\end{equation*}
Thus for every $\ell\in [L+1]$, 
$W(y|f(m_\ell))>0$, which implies $m_\ell\in g(y)$. Thus
$g(y)$ contains at least $L+1$ distinct elements, which
is a contradiction. 

Conversely, suppose we have an encoder 
$f\colon [M]\rightarrow\mathcal{X}$
that maps every $(L+1)$-subset
of $[M]$ onto an edge of $G(W)$. 
For each $y\in\mathcal{Y}$, define the set 
\begin{equation*}
  \mathcal{M}_y\coloneqq
  \Big\{m\in [M]\Big|W(y|f(m))>0\Big\}.
\end{equation*}
Suppose for some $y^*\in\mathcal{Y}$,
$|\mathcal{M}_{y^*}|>L$. Then $\mathcal{M}_{y^*}$
has a subset $\mathcal{A}$ of cardinality $L+1$. By
assumption, $f$ maps $\mathcal{A}$ to an edge of 
$G(W)$, which implies that for all $y\in\mathcal{Y}$,
including $y=y^*$, 
\begin{equation*}
  \prod_{m\in\mathcal{A}}W\big(y|f(m)\big)=0.
\end{equation*} 
This contradicts the definition of $M_{y^*}$. Thus
for all $y\in\mathcal{Y}$, $|\mathcal{M}_y|\leq L$.

Now if we define the decoder as
\begin{equation*}
  \forall\: y\in\mathcal{Y}\colon
  g(y)=\mathcal{M}_y,
\end{equation*}
then the pair $(f,g)$ is an $(M,L)$ zero-error list code
and the proof is complete. 

\subsection{Proof of Proposition \ref{prop:thetaProperties}}
\label{subsec:thetaProperties}

(i) We prove the nonnegativity of $\theta^{(\ell)}_{L+1}(G,P)$
first for $\ell=L+1$ and then for arbitrary $\ell\in [L]$. 
Recall that $\theta_{L+1}^{(L+1)}(G,P)=I_{L+1}(G,P)$. We have
\begin{equation*}
  \sum_{v_{[L+1]}:\sigma(v_{[L+1]})\notin\mathcal{E}}P(v_{[L+1]})
  \leq \sum_{v_{[L+1]}\in\mathcal{V}^{L+1}}P(v_{[L+1]})=
  \Big(\sum_{v\in\mathcal{V}} P(v)\Big)^{L+1}=1,
\end{equation*}
which implies
\begin{equation*}
  \theta_{L+1}^{(L+1)}(G,P)=I_{L+1}(G,P)
  =-\frac{1}{L}\log
  \sum_{v_{[L+1]}:\sigma(v_{[L+1]})\notin\mathcal{E}}
  P(v_{[L+1]})\geq 0. 
\end{equation*}

For $\ell\in [L]$, rewrite $\theta^{(\ell)}_{L+1}(G,P)$ as 
\begin{equation*}
  \theta^{(\ell)}_{L+1}(G,P)
  =\frac{1}{L}\log
  \frac{\sum_{v_{[\ell]}}P(v_{[\ell]})
  \Big[\sum_{v_{[\ell+1:L+1]}:\sigma(v_{[L+1]})\notin\mathcal{E}}
  P(v_{[\ell+1:L+1]})\Big]^2}{\Big[\sum_{v_{[L+1]}:\sigma(v_{[L+1]})\notin
  \mathcal{E}}P(v_{[L+1]})\Big]^2}.
\end{equation*}
Note that
\begin{equation*}
  \sum_{v_{[\ell]}\in\mathcal{V}^\ell}P(v_{[\ell]})
  =\Big(\sum_{v\in\mathcal{V}} P(v)\Big)^\ell=1.
\end{equation*}
Therefore, by the Cauchy-Schwarz inequality,
\begin{align*}
  \sum_{v_{[\ell]}}P(v_{[\ell]})
  \bigg[\sum_{v_{[\ell+1:L+1]}:\sigma(v_{[L+1]})\notin\mathcal{E}}
  P(v_{[\ell+1:L+1]})\bigg]^2
  &\geq \bigg[\sum_{v_{[\ell]}}P(v_{[\ell]})
  \sum_{v_{[\ell+1:L+1]}:\sigma(v_{[L+1]})\notin\mathcal{E}}
  P(v_{[\ell+1:L+1]})\bigg]^2\\ &\geq
  \bigg[\sum_{v_{[L+1]}:\sigma(v_{[L+1]})\notin
  \mathcal{E}}P(v_{[L+1]})\bigg]^2,
\end{align*}
which implies $\theta^{(\ell)}_{L+1}(G,P)\geq 0$.

We next prove the upper bound on $\theta^{(\ell)}_{L+1}(G,P)$. 
Note that 
\begin{align*}
  \sum_{v_{[\ell]}}P(v_{[\ell]})
  \bigg[\sum_{v_{[\ell+1:L+1]}:\sigma(v_{[L+1]})\notin\mathcal{E}}
  P(v_{[\ell+1:L+1]})\bigg]^2
  &\leq 
  \sum_{v_{[\ell]}}P(v_{[\ell]})
  \bigg[\sum_{v_{[\ell+1:L+1]}:\sigma(v_{[L+1]})\notin\mathcal{E}}
  P(v_{[\ell+1:L+1]})\bigg]\\
  &=2^{-LI_{L+1}(G,P)}.
\end{align*}
Thus
\begin{equation*}
  \theta^{(\ell)}_{L+1}(G,P)\leq 2I_{L+1}(G,P)-I_{L+1}(G,P)=I_{L+1}(G,P).
\end{equation*}

(ii) The inequality $I_k(G,P)\geq 0$ is proved in (i). Equality
holds if and only if 
\begin{equation*}
  \forall\: v_{[L+1]}\in\big(\mathrm{supp}(P)\big)^{L+1}\colon
  \sigma(v_{[L+1]})\notin\mathcal{E},
\end{equation*}
which is equivalent to
\begin{equation*}
  \forall\: e\subseteq\mathrm{supp}(P)\colon
  \big(2\leq |e|\leq L+1\implies e\notin\mathcal{E}\big).
\end{equation*} 

We next prove the upper bound on $I_{L+1}(G,P)$. Since each edge 
of $G$ has cardinality at least two, for all $v\in\mathcal{V}$, 
$\{v\}\notin\mathcal{E}$.
Thus
\begin{equation*}
  \sum_{v_{[L+1]}:\sigma(v_{[L+1]})\notin\mathcal{E}}P(v_{[L+1]})
  \geq \sum_{v\in\mathcal{V}} \big(P(v)\big)^{L+1},
\end{equation*}
which implies
\begin{equation*}
  I_{L+1}(G,P)\leq H_{L+1}(P),
\end{equation*}
where $H_{L+1}(P)$ is the R{\'e}nyi entropy of order $L+1$. 
Equality holds if and only if
\begin{equation*}
  \forall\: v_{[L+1]}\in\big(\mathrm{supp}(P)\big)^{L+1}\colon
  \big(v_1=\dots=v_{L+1}\big)
  \lor\big(\sigma(v_{[L+1]})\in\mathcal{E}\big),
\end{equation*}
which is equivalent to
\begin{equation*}
  \forall\: e\subseteq\mathrm{supp}(P)\colon
  \big(2\leq |e|\leq L+1\implies e\in\mathcal{E}\big).
\end{equation*} 

(iii) Fix a positive integer $n$. Let $\mathcal{E}_n$
denote the set of edges of $G^n$. Let $v_{[L+1]}^n$
denote the vector 
\begin{equation*}
  v_{[L+1]}^n:=\big(v_1^n,\dots,v_{L+1}^n\big),
\end{equation*}
and $\sigma_{L+1}(v_{[L+1]}^n)$ denote the set
\begin{equation*}
  \sigma_{L+1}(v_{[L+1]}^n):=
  \big\{v_1^n,\dots,v_{L+1}^n\big\}.
\end{equation*}
Furthermore, let $\mathcal{S}\subseteq\mathcal{V}^{L+1}$ denote the
set
\begin{equation*}
  \mathcal{S}:=\big\{v_{[L+1]}\big|
  \sigma_{L+1}(v_{[L+1]})\notin\mathcal{E}\big\}.
\end{equation*}
Note that for each $v_{[L+1]}^n$, 
$\sigma_{L+1}(v_{[L+1]}^n)\notin\mathcal{E}_n$
if and only if
\begin{equation*}
  \forall\: t\in [n]\colon
  \big\{v_{1t},\dots,v_{(L+1)t}\big\}\notin\mathcal{E}.
\end{equation*}
Thus
\begin{equation*}
  \big\{v_{[L+1]}^n\big|
  \sigma_{L+1}(v_{[L+1]}^n)\notin\mathcal{E}_n\big\}
  = \mathcal{S}^n,
\end{equation*}
which implies
\begin{align} \label{eq:multi2singleLetter}
  \sum_{v_{[L+1]}^n:\sigma_k(v_{[L+1]}^n)\notin\mathcal{E}_n}
  P^n(v_{[L+1]}^n)
  &= \sum_{v_{[L+1]}^n\in \mathcal{S}^n}P^n(v_{[k]}^n)\notag\\
  &= \sum_{v_{[L+1]}^n\in \mathcal{S}^n}\prod_{t\in [n]}P(v_{[k]t})\\
  &= \prod_{t\in [n]}\sum_{v_{[L+1]t}\in S}P(v_{[L+1]t})\notag\\
  &=\Big(\sum_{v_{[L+1]}\in \mathcal{S}}P(v_{[L+1]})\Big)^n,\notag
\end{align}
where in (\ref{eq:multi2singleLetter}), 
$v_{[L+1]t}=(v_{1t},\dots,v_{(L+1)t})$. 
Therefore, 
\begin{equation*}
  I_{L+1}(G^n,P^n)=nI_{L+1}(G,P).
\end{equation*}

For $\ell\in [L]$, we can write $\theta^{(\ell)}_{L+1}(G^n,P^n)$
as
\begin{equation*}
  \theta^{(j)}_{L+1}(G^n,P^n)
  =2I_{L+1}(G^n,P^n)+\frac{1}{L}\log
  \sum_{
  \substack{
  (v_{[\ell]}^n,v_{[\ell+1:L+1]}^n,\bar{v}_{[\ell+1:L+1]}^n):\\
  (v_{[\ell]}^n,v_{[\ell+1:L+1]}^n)\in \mathcal{S}^n\\
  (v_{[\ell]}^n,\bar{v}_{[\ell+1:L+1]}^n)\in \mathcal{S}^n\\
  }
  }P^n\big(v_{[\ell]}^n\big)P^n\big(v_{[\ell+1:L+1]}^n\big)
  P^n\big(\bar{v}_{[\ell+1:L+1]}^n\big).
\end{equation*}
Using a similar argument as above, it follows that for all
$\ell\in [L]$,
\begin{equation*}
  \theta^{(\ell)}_{L+1}(G^n,P^n)
  = n\theta^{(\ell)}_{L+1}(G,P).
\end{equation*}

\subsection{Proof of Theorem \ref{thm:main}} \label{subsec:thmMain}

We start by finding upper and lower bounds on the 
probability
that a random mapping from $[M]$ to $\mathcal{X}$ is 
an $(M,L)$ 
zero-error list code for the channel $W$.\footnote{Without loss 
of generality, we may assume that $M>L$, since if $M\leq L$,
then every mapping $f\colon [M]\rightarrow\mathcal{X}$ is
an $(M,L)$ zero-error list code.} The  
theorem then follows from applying our bounds to the 
channel $W^n$ for every positive integer $n$. 

Consider the random mapping
$F\colon [M]\rightarrow \mathcal{X}$, where $(F(m))_{m\in [M]}$
is a collection of i.i.d.\ random variables and each 
$F(m)$ has distribution
\begin{equation*}
  \pr\big\{F(m)=x\big\}:=P(x).
\end{equation*}
For every $S\in\binom{[M]}{L+1}$, define the random 
variable $Z_S$ as 
\begin{equation*}
  Z_S:=\mathbf{1}\Big\{\{F(m)\}_{m\in S}
  \notin \mathcal{E}\Big\};
\end{equation*}
that is, $Z_S$ is the indicator of the event that 
$\{F(m)\}_{m\in S}$ is not an edge of the distinguishability 
hypergraph $G(W)$. 
Let\footnote{For results regarding the distribution of $Z$ in the classical
birthday problem scenario, we refer the reader to the work of Arratia,
Goldstein, and Gordon \cite{ArratiaEtAl,ArratiaEtAl2}. A
direct application of the bounds in \cite{ArratiaEtAl,ArratiaEtAl2}
to $\pr\{Z=0\}$ leads to weaker results than those we present here.}
\begin{equation*}
  Z:=\sum_{S\in\binom{[M]}{L+1}} Z_S.
\end{equation*}
Note that by Proposition \ref{prop:existence},
$F$ is an $(M,L)$ zero-error list code if and only if $Z=0$.
The rest of the proof consists of computing
a lower and an upper bound for $\pr\{Z=0\}$. 

\textbf{Lower Bound.} By Markov's inequality,
\begin{align}
  \MoveEqLeft
  \pr\big\{F\text{ is an }(M,L)\text{ zero-error list code}\big\}
  \notag\\
  &= \pr\{Z=0\}\notag\\
  &= 1-\pr\big\{Z\geq 1\big\}\notag\\
  &\geq 1-\mathbb{E}[Z].\label{eq:MarkovIneqLB}
\end{align}
For any $S\in\binom{[M]}{L+1}$, 
\begin{equation} \label{eq:expZS}
  \mathbb{E}[Z_S]= 
  \sum_{x_{[L+1]}:\sigma(x_{[L+1]})\notin \mathcal{E}} 
  P(x_{[L+1]})=2^{-LI_{L+1}(G,P)}.
\end{equation}
By linearity of expectation,
\begin{equation} \label{eq:EZ}
  \mathbb{E}[Z]= \binom{M}{L+1}2^{-LI_{L+1}(G,P)}.
\end{equation}
Combining (\ref{eq:MarkovIneqLB}) and (\ref{eq:EZ})
gives
\begin{align*}
  \pr\big\{Z=0\big\}
  &\geq 1-\binom{M}{L+1}2^{-LI_{L+1}(G,P)}\\
  &\geq 1-M^{L+1}2^{-LI_{L+1}(G,P)},
\end{align*}
where the last inequality follows from the fact that 
\begin{equation*}
  \binom{M}{L+1}
  \leq M^{L+1}. 
\end{equation*}

\textbf{Upper Bound.}  We apply the second moment method.
By the Cauchy-Schwarz inequality, 
\begin{equation*}
  \mathbb{E}[Z]=\mathbb{E}\big[Z\mathbf{1}_{\{Z\geq 1\}}\big]
	\leq \sqrt{\mathbb{E}[Z^2]\times\pr\{Z\geq 1\}},
\end{equation*}
thus 
\begin{equation*}
  \pr\{Z\geq 1\}\geq \frac{\big(\mathbb{E}[Z]\big)^2}{\mathbb{E}[Z^2]}
\end{equation*}
or
\begin{equation*}
  \pr\{Z=0\} \leq 1-\frac{\big(\mathbb{E}[Z]\big)^2}{\mathbb{E}[Z^2]}.
\end{equation*}
To evaluate the upper bound on $\pr\{Z=0\}$, 
we calculate $\mathbb{E}[Z^2]$.
We have
\begin{align} \label{eq:Zsquared}
  Z^2 &= \Bigg[\sum_{S\in\binom{[M]}{L+1}}Z_S\Bigg]^2\notag\\
  &=\sum_{S\in\binom{[M]}{L+1}} Z_S^2+
  \sum_{\substack{S,S'\in\binom{[M]}{L+1}\\
  S\neq S'}} Z_SZ_{S'} \notag\\
  &=\sum_{S\in\binom{[M]}{L+1}} Z_S+
  \sum_{\ell=0}^L\sum_{\substack{S,S'\in\binom{[M]}{L+1}\\
  |S\cap S'|=\ell}} Z_SZ_{S'}.
\end{align}
For all $\ell\in\{0,1,\dots,L\}$, fix sets $S_\ell,S'_\ell\in\binom{[M]}{L+1}$
such that $|S_\ell\cap S_\ell'|=\ell$. When $\ell\in [L]$, 
$(F(m))_{m\in S_\ell}$ and $(F(m))_{m\in S_\ell'}$ are 
independent given $(F(m))_{m\in S_\ell\cap S_\ell'}$. Thus
for $\ell\in [L]$,
\begin{align}
  \mathbb{E}[Z_{S_\ell}Z_{S'_\ell}]
  &=\sum_{x_{[\ell]}}P(x_{[\ell]})
  \Bigg[\sum_{\substack{x_{[\ell+1:L+1]}:\\ \sigma(x_{[L+1]})
  \notin \mathcal{E}}}
  P(x_{[\ell+1:L+1]})\Bigg]^2\notag\\
  &= 2^{L(\theta_{L+1}^{(\ell)}(G,P)-2I_{L+1}(G,P))},
  \label{eq:ZSl}
\end{align}
where in (\ref{eq:ZSl}), we use the definition of 
$\theta_{L+1}^{(\ell)}(G,P)$ given by
(\ref{eq:defTkGP}). 
When $\ell=0$, 
$Z_{S_\ell}$ and $Z_{S_{\ell}'}$ are independent.
Thus by (\ref{eq:expZS}),
\begin{equation} \label{eq:ZS0ZS0Prime}
  \mathbb{E}[Z_{S_0}Z_{S_0'}]
  = \big(\mathbb{E}[Z_{S_0}]\big)^2=2^{-2LI_{L+1}(G,P)}.
\end{equation}
Equations (\ref{eq:Zsquared}), (\ref{eq:ZSl}), and
(\ref{eq:ZS0ZS0Prime}) together imply 
\begin{align}
  \mathbb{E}[Z^2] &= \binom{M}{L+1}2^{-LI_{L+1}(G,P)}
  +\binom{M}{0,L+1,L+1}2^{-2LI_{L+1}(G,P)}
  \notag\\
  &\phantom{=}+\sum_{\ell=1}^L \binom{M}{\ell,L+1-\ell,L+1-\ell}
  2^{L(\theta_{L+1}^{(\ell)}(G,P)-2I_{L+1}(G,P))},\label{eq:EZ2}
\end{align}
where in (\ref{eq:EZ2}), for $\ell\in \{0,1,\dots,L\}$,
the quantity
\begin{align*}
  \binom{M}{\ell,L+1-\ell,L+1-\ell}
  :=\binom{M}{\ell}\binom{M-\ell}{L+1-\ell}
  \binom{M-L-1}{L+1-\ell},
\end{align*}
equals the number of pairs $(S,S')$, where $S,S'\in \binom{[M]}{L+1}$
and $|S\cap S'|=\ell$. Combining (\ref{eq:EZ}) with 
(\ref{eq:EZ2}) now gives
\begin{align} \label{eq:cvSquared}
  \frac{\mathbb{E}[Z^2]}{\big(\mathbb{E}[Z]\big)^2} 
  &=\binom{M}{L+1}^{-1}2^{LI_{L+1}(G,P)}
  +\binom{M}{L+1}^{-2}\binom{M}{0,L+1,L+1}\notag \\
  &\phantom{=}+\binom{M}{L+1}^{-2}
  \sum_{\ell=1}^L \binom{M}{\ell,L+1-\ell,L+1-\ell}
  2^{L\theta_{L+1}^{(\ell)}(G,P)}\notag \\
  &\leq \binom{M}{L+1}^{-1}2^{LI_{L+1}(G,P)}
  +1+\sum_{\ell=1}^L \binom{L+1}{\ell}^2\binom{M}{\ell}^{-1}
  2^{L\theta_{L+1}^{(\ell)}(G,P)}.
\end{align}
The inequality in (\ref{eq:cvSquared}) follows
from the fact that for each $\ell\in\{0,1,\dots,L\}$, 
\begin{align*}
  \binom{M}{L+1}^{-2}\binom{M}{\ell,L+1-\ell,L+1-\ell}
  &=\binom{L+1}{\ell}^2\binom{M}{\ell}^{-1}\times
  \frac{(M-L-1)!(M-L-1)!}{(M-\ell)!(M-2L-2+\ell)!}\\
  &= \binom{L+1}{\ell}^2\binom{M}{\ell}^{-1}\prod_{j=\ell}^L
  \Big(\frac{M-L-1+\ell-j}{M-j}\Big)\\
  &\leq \binom{L+1}{\ell}^2\binom{M}{\ell}^{-1}.
\end{align*}
This completes the proof of the upper bound. 

The asymptotic result, as stated in Theorem \ref{thm:main}, 
follows from applying, for every $\ell\in [L+1]$, the inequality
\begin{equation*}
  \binom{M}{\ell}\geq
  \Big(\frac{M}{\ell}\Big)^\ell. 
\end{equation*} 

\subsection{Proof of Corollary \ref{cor:equality}}
\label{subsec:equality}

Corollary \ref{cor:equality} follows from applying the 
next lemma to the hypergraph $G^n(W)$ for sufficiently
large $n$. 

\begin{lem} Let $G=(\mathcal{V},\mathcal{E})$ be a
hypergraph and let 
$P$ be a distribution on $\mathcal{V}$. 
If $G_P$ is complete multipartite, then
\begin{equation} \label{eq:maxEllTheta}
  \max_{\ell\in [L+1]}\frac{1}{\ell}
  \theta^{(\ell)}_{L+1}(G,P)
  =\frac{1}{L+1}I_{L+1}(G,P).
\end{equation}
\end{lem}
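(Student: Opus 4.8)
The plan is to exploit the complete multipartite structure of $G_P$ to collapse every sum appearing in $I_{L+1}(G,P)$ and in the $\theta^{(\ell)}_{L+1}(G,P)$ into sums over the independent parts, after which (\ref{eq:maxEllTheta}) reduces to a single scalar inequality.

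First I would record the key structural fact. Let $\{\mathcal{I}_j\}_{j=1}^k$ be the partition of $\mathcal{V}_P=\mathrm{supp}(P)$ witnessing that $G_P$ is complete multipartite. Since every edge has cardinality at least two, a singleton is never an edge, so for any tuple $v_{[L+1]}$ with all components in the support the complete multipartite property yields
\[
  \sigma(v_{[L+1]})\notin\mathcal{E}\iff v_1,\dots,v_{L+1}\text{ all lie in a common part }\mathcal{I}_j.
\]
Writing $p_j:=P(\mathcal{I}_j)=\sum_{v\in\mathcal{I}_j}P(v)$, disjointness of the parts shows that the probability of $L+1$ i.i.d.\ draws being indistinguishable is $\sum_{j=1}^k p_j^{L+1}$. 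Setting $a:=\sum_{j=1}^k p_j^{L+1}$, this gives $I_{L+1}(G,P)=-\tfrac1L\log a$. Only components in the support contribute, since $P(v_{[L+1]})=0$ otherwise.

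Next I would evaluate the quadratic form defining $\theta^{(\ell)}_{L+1}(G,P)$. Fix $\ell\in[L]$ and a tuple $v_{[\ell]}$ in the support. By the equivalence above, the inner sum $\sum_{v_{[\ell+1:L+1]}:\sigma(v_{[L+1]})\notin\mathcal{E}}P(v_{[\ell+1:L+1]})$ vanishes unless $v_1,\dots,v_\ell$ all lie in one part $\mathcal{I}_j$, in which case it equals $p_j^{L+1-\ell}$ (the remaining $L+1-\ell$ draws must also fall in $\mathcal{I}_j$). Summing the square against $P(v_{[\ell]})$ and using $\sum_{v_{[\ell]}\in\mathcal{I}_j^\ell}P(v_{[\ell]})=p_j^\ell$, the double sum becomes $\sum_{j=1}^k p_j^\ell\,(p_j^{L+1-\ell})^2=\sum_{j=1}^k p_j^{2(L+1)-\ell}$. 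Substituting into the definition and absorbing $2I_{L+1}(G,P)=\tfrac1L\log a^{-2}$, I obtain the closed form
\[
  \theta^{(\ell)}_{L+1}(G,P)=\frac1L\log\frac{\sum_{j=1}^k p_j^{2(L+1)-\ell}}{\big(\sum_{j=1}^k p_j^{L+1}\big)^2},
\]
which one checks also holds at $\ell=L+1$, where it reduces to $I_{L+1}(G,P)$.

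Finally I would reduce (\ref{eq:maxEllTheta}) to a norm inequality. Plugging in the closed form, the inequality $\tfrac1\ell\theta^{(\ell)}_{L+1}(G,P)\le\tfrac1{L+1}I_{L+1}(G,P)$, after clearing the positive factors $L,\ell,L+1$ and exponentiating, is equivalent to
\[
  \Big(\sum_{j=1}^k p_j^{2(L+1)-\ell}\Big)^{L+1}\le\Big(\sum_{j=1}^k p_j^{L+1}\Big)^{2(L+1)-\ell}.
\]
Since $2(L+1)-\ell\ge L+1$ for $\ell\in[L+1]$, this is exactly the monotonicity of $\ell^p$-norms $\|p\|_m\le\|p\|_n$ for $m\ge n$, applied with $m=2(L+1)-\ell$ and $n=L+1$; I would prove it in one line from $\sum_j p_j^m\le(\max_j p_j)^{m-n}\sum_j p_j^n$ together with $\max_j p_j\le(\sum_j p_j^n)^{1/n}$. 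At $\ell=L+1$ we have $m=n$, so equality holds, and there $\tfrac1{L+1}\theta^{(L+1)}_{L+1}(G,P)=\tfrac1{L+1}I_{L+1}(G,P)$ by definition; hence the maximum is attained and equals the right-hand side. The main obstacle is the middle step: correctly seeing that the quadratic form factorizes over the parts and collapses to $\sum_j p_j^{2(L+1)-\ell}$. Once that closed form is in hand, the optimization is the standard norm-monotonicity inequality.
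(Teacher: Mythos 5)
Your proposal is correct and follows essentially the same route as the paper: it uses the complete multipartite structure to reduce $I_{L+1}(G,P)$ and $\theta^{(\ell)}_{L+1}(G,P)$ to power sums of the part-weights $p_j=P(\mathcal{I}_j)$ (the paper phrases these as R\'enyi entropies of the induced distribution $P^*$), and then invokes the monotonicity of $\ell^p$-norms. The only differences are cosmetic: you write the closed forms directly as ratios of power sums and supply a one-line proof of the norm inequality that the paper simply cites as well known.
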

\begin{proof}
Since $G_P=(\mathcal{V}_P,\mathcal{E}_P)$ 
is complete multipartite, there exists
a partition of $\mathcal{V}_P$ consisting of 
independent sets. Let $\{\mathcal{I}_{j}\}_{j=1}^k$
denote such a partition. 
Define the distribution $P^*$ on $[k]$ as 
\begin{equation*}
  P^*(j):=\sum_{v\in\mathcal{I}_j}P(v).
\end{equation*}
In words, $P^*(j)$ is the weight assigned to the independent
set $\mathcal{I}_j$ by $P$.  
Since $G_P$ is a complete multipartite hypergraph, 
$\sigma(v_{[L+1]})\notin\mathcal{E}$ if and only if 
there exists some $j$ such that 
$\sigma(v_{[L+1]})\subseteq\mathcal{I}_j$. Thus
\begin{align*}
  I_{L+1}(G_P,P) &=
  -\frac{1}{L}\log\sum_{v_{[L+1]}:\sigma(v_{[L+1]})\notin\mathcal{E}}
  P(v_{[L+1]})\\
  &= -\frac{1}{L}\log\sum_{j=1}^k\sum_{\sigma(v_{[L+1]})\subseteq\mathcal{I}_j}
  P(v_{[L+1]})\\
  &= -\frac{1}{L}\log\sum_{j=1}^k \big(P^*(j)\big)^{L+1}\\
  &= H_{L+1}(P^*),
\end{align*}
where $H_{L+1}$ denotes the R\'{e}nyi entropy of order $L+1$.
Similarly, for all $\ell\in [L+1]$ we have
\begin{equation} \label{eq:thetaMP}
  \theta_{L+1}^{(\ell)}(G_P,P)=2H_{L+1}(P^*)-
  \frac{2L+1-\ell}{L}H_{2L+2-\ell}(P^*).
\end{equation}

Using (\ref{eq:thetaMP}), we see that proving (\ref{eq:maxEllTheta}) 
is equivalent to showing that for all $\ell\in [L+1]$,
\begin{equation*}  
  \Big(\sum_{j=1}^k P^*(j)^{2L+2-\ell}\Big)
  ^\frac{1}{2L+2-\ell}\leq
  \Big(\sum_{j=1}^k P^*(j)^{L+1}\Big)
  ^\frac{1}{L+1},
\end{equation*}
which follows from the well-known fact that for all $p\geq q$, 
the $q$-norm dominates the $p$-norm.

Finally, note that for all $\ell\in [L+1]$, 
\begin{equation*}
  \theta_{L+1}^{(\ell)}(G_P,P)
  =\theta_{L+1}^{(\ell)}(G,P).
\end{equation*}
This completes the proof. 
\end{proof}

\subsection{Proof of Lemma \ref{lem:completeMultipartite}}
Since $G$ is complete multipartite,
there exists a partition of its set of vertices,
say $\{\mathcal{I}_{j}\}_{j=1}^k$, consisting of 
independent sets. 
Next note that for any $n\geq 2$, the set of vertices 
of $G^n$ is given by 
\begin{equation*}
  \bigcup_{j_1,\dots,j_n\in [k]}
  \mathcal{I}_{j_1}\times\dots\times
  \mathcal{I}_{j_n}
\end{equation*}
We show that an arbitrary subset of 
$\mathcal{V}^n$, say $\{v_1^n,\dots,v_\ell^n\}$, 
is an edge in $G^n$ if and only if 
\begin{equation} \label{eq:nDimIndep}
  \forall\: j_1,\dots,j_n\in [k]\colon
  \{v_1^n,\dots,v_\ell^n\}
  \not\subseteq \mathcal{I}_{j_1}\times\dots
  \times\mathcal{I}_{j_n}.
\end{equation}
By definition, $\{v_1^n,\dots,v_\ell^n\}$
is an edge in $G^n$ if and only if for some
$t\in [n]$, $\{v_{1t},\dots,v_{\ell t}\}$ is an
edge in $G$. Since $G$ is complete multipartite,
the latter condition holds if and
only if
\begin{equation*}
  \exists\: i\in [n]\text{ such that }
  \forall\:j\in [k]\colon
  \{v_{1i},\dots,v_{\ell i}\}
  \not\subseteq \mathcal{I}_j,
\end{equation*}
which is equivalent to (\ref{eq:nDimIndep}).

\section{Conclusion}

From Shannon's random coding argument \cite{Shannon48} 
it follows that if the rate of a randomly generated 
codebook is less than the input-output mutual information, 
the probability that the codebook has small probability 
of error goes to one as the blocklength goes to infinity. 
In this work, we find necessary and sufficient conditions on 
the rate so that the probability that a randomly generated
codebook has \emph{zero} probability of error goes to 
one as the blocklength goes to infinity. We further show 
that this result extends the classical
birthday problem to an information-theoretic setting and 
provides an intuitive meaning for R{\'e}nyi entropy.

\section*{Appendix A\\Properties of $I_2(G,P)$}
\label{app:Motzkin Straus}

In this appendix, we describe two properties of $I_2(G,P)$. 
In the first part, we state the Motzkin-Straus theorem \cite{MotzkinStraus}, which
gives the maximum of $I_2(G,P)$ over all distributions $P$
for a fixed graph $G$. In the second part, we show that
$I_2(G,P)$ is always less than or equal to 
K\"{o}rner's graph entropy \cite{Korner}.

\subsection{The Motzkin-Straus Theorem}
Consider a graph $G=(\mathcal{V},\mathcal{E})$. 
Motzkin and Straus \cite{MotzkinStraus}
prove that 
\begin{equation*}
  \max_P I_2(G,P)=\log\omega (G),
\end{equation*}
where the maximum is over all distributions $P$ defined on
$\mathcal{V}$, and $\omega (G)$ is the cardinality of the 
largest clique in $G$. An implication of this 
result is Tur{\'a}n's graph theorem \cite{Aigner}, 
which states that 
\begin{equation} \label{eq:Turan}
  \omega (G)\geq \frac{|\mathcal{V}|^2}
  {|\mathcal{V}|^2-2|\mathcal{E}|}.
\end{equation}
To see this, let $P$ be the uniform distribution on $\mathcal{V}$. 
Then
\begin{equation*}
  I_2(G,P)=-\log\sum_{v,v':\{v,v'\}\notin\mathcal{E}}
  \frac{1}{|\mathcal{V}|^2}=
  \log\frac{|\mathcal{V}|^2}{|\mathcal{V}|^2-2|\mathcal{E}|},
\end{equation*}
and (\ref{eq:Turan}) follows by the Motzkin-Straus theorem.
We remark that extensions of the Motzkin-Straus theorem
to hypergraphs are presented in \cite{SosStraus, FranklRodl, RotaPelillo}.

From Proposition \ref{prop:thetaProperties},
Part (ii) it follows that for any 
distribution $P$ defined on a set
$\mathcal{V}$, 
\begin{equation} \label{eq:maxOverGraphs}
  \max_{G}I_2(G,P)=H_2(P),
\end{equation}
where the maximum is over all graphs $G=(\mathcal{V},\mathcal{E})$.
In (\ref{eq:maxOverGraphs}), the maximum
is achieved when $G$ is the complete graph on the support of $P$.

\subsection{Relation with K\"{o}rner's Graph Entropy}
Consider a graph $G$ with vertex set $\mathcal{X}$.
Let $P$ be a probability distribution on $\mathcal{X}$
and $\mathcal{Y}\subseteq 2^\mathcal{X}$ be the 
set of all maximal independent subgraphs
of $G$. Let $\Delta(G,P)$ denote 
the set of all probability distributions $P(x,y)$ on 
$\mathcal{X}\times\mathcal{Y}$ whose marginal on $\mathcal{X}$
equals $P(x)$, and 
\begin{equation*}
  \pr\big\{X\in Y\big\}=\sum_{(x,y):x\in y}P(x,y)=1.
\end{equation*}
For the graph $G$ and probability distribution
$P$,  K\"{o}rner's graph entropy \cite{Korner} is defined 
by 
\begin{equation} \label{eq:KornerEntropy}
  H_1(G,P)=\min_{\Delta(G,P)} I(X;Y).
\end{equation}
Our aim is to define $H_2(G,P)$ in a similar manner to 
how $H_2(P)$, the  R\'{e}nyi entropy of order 2, is defined. 
One way to accomplish 
this task is through the use of Jensen's inequality. 
Applying Jensen's inequality to Shannon entropy gives
\begin{align*}
  H_1(P) &= -\sum_{x}P(x)\log P(x)\\
  &\geq -\log\Big(\sum_{x}\big(P(x)\big)^2\Big)=H_2(P).
\end{align*}
Analogously, applying Jensen's inequality to the 
mutual information in (\ref{eq:KornerEntropy}) gives
\begin{align*}
  I(X;Y) &= -\sum_{(x,y):x\in y}P(x,y)\log \frac{P(x)P(y)}{P(x,y)}\\
  &\geq -\log\Big(\sum_{(x,y):x\in y} P(x)P(y)\Big),
\end{align*}
Thus we define $H_2(G,P)$ as 
\begin{align*}
  H_2(G,P)&=\min_{\Delta(G,P)} 
  -\log\Big(\sum_{(x,y):x\in y} P(x)P(y)\Big)\\
  &\leq H_1(G,P).
\end{align*}
Our next proposition relates $H_2(G,P)$ and $I_2(G,P)$.
\begin{prop} \label{prop:KornerEntropy}
For any graph $G$ and any probability distribution $P$ 
defined on its vertices, $I_2(G,P)\leq H_2(G,P)$. 
\end{prop}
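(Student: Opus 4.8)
The plan is to unwind both quantities into probabilities of independent draws and then exhibit a single clean event inclusion. Recall that $I_2(G,P)=-\log\sum_{\{v_1,v_2\}\notin\mathcal{E}}P(v_1)P(v_2)$ and that, by definition, $H_2(G,P)=\min_{\Delta(G,P)}\big(-\log\sum_{(x,y):x\in y}P(x)P(y)\big)$, where $P(y)$ denotes the $\mathcal{Y}$-marginal of the chosen joint law $P(x,y)$. Since $-\log$ is decreasing, I would reduce the proposition to showing that \emph{every} joint law $P(x,y)\in\Delta(G,P)$ satisfies
\[
  \sum_{(x,y):x\in y}P(x)P(y)\le\sum_{\{v_1,v_2\}\notin\mathcal{E}}P(v_1)P(v_2);
\]
applying $-\log$ to both sides and then minimizing over $\Delta(G,P)$ yields $I_2(G,P)\le H_2(G,P)$, which is exactly the claim.

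To prove this pointwise inequality, I would introduce two independent copies $(X,Y)$ and $(X',Y')$ of the joint law $P(x,y)$. Each copy has $\mathcal{X}$-marginal $P$ and obeys the support constraint built into $\Delta(G,P)$, namely $X\in Y$ and $X'\in Y'$ almost surely. Because the two copies are independent, $X$ and $Y'$ are independent, with $X\sim P$ and $Y'$ distributed as the $\mathcal{Y}$-marginal of the joint; hence $\pr\{X\in Y'\}=\sum_{(x,y):x\in y}P(x)P(y)$, which is precisely the left-hand side of the target inequality.

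The crux is the event inclusion. Since $X'\in Y'$ holds always, whenever the event $\{X\in Y'\}$ occurs we have $\{X,X'\}\subseteq Y'$; as $Y'\in\mathcal{Y}$ is an independent set, every two-element subset of it is a non-edge, so $\{X,X'\}\notin\mathcal{E}$ (and the degenerate case $X=X'$ gives a singleton, which is also not an edge). Thus $\{X\in Y'\}\subseteq\{\{X,X'\}\notin\mathcal{E}\}$, and because $X$ and $X'$ are i.i.d.\ $\sim P$,
\[
  \pr\{X\in Y'\}\le\pr\big\{\{X,X'\}\notin\mathcal{E}\big\}
  =\sum_{\{v_1,v_2\}\notin\mathcal{E}}P(v_1)P(v_2),
\]
which is exactly the desired pointwise bound.

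The only real obstacle is spotting the correct coupling: one must pair $X$ from one copy with $Y'$ from the \emph{other} copy, so that the always-true relation $X'\in Y'$ can be leveraged to force both $X$ and $X'$ into a common independent set. Once this cross-pairing is in place the inclusion is immediate, and notably no property of the minimizing joint law beyond membership in $\Delta(G,P)$ is used; this is why the bound holds uniformly over $\Delta(G,P)$ and therefore survives passing to the minimum that defines $H_2(G,P)$.
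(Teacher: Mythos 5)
Your proof is correct and is essentially the paper's own argument recast in probabilistic language: the paper's identity $\sum_{x\in y}P(x)P(y)=\sum_{x,x'}P(x)P(x')\sum_y P(y|x')\mathbf{1}\{x,x'\in y\}$ is exactly your computation of $\pr\{X\in Y'\}$ via the cross-pairing of $X$ with $Y'$, and its bound $\sum_y P(y|x')\mathbf{1}\{x,x'\in y\}\leq\mathbf{1}\{(x,x')\notin\mathcal{E}\}$ is exactly your event inclusion using that $Y'$ is an independent set. No substantive difference.
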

\begin{proof}
Let $P(x,y)\in \Delta(G,P)$. Then
\begin{equation} \label{eq:twoReps}
  \sum_{x\in y} P(x)P(y)
  = \sum_{x,x'}P(x)P(x')\sum_{y}P(y|x')
  \mathbf{1}\big\{x,x'\in y\big\}.
\end{equation}
Since every $y$ is an independent subgraph of $G$,
if $x,x'\in y$, then $(x,x')\notin \mathcal{E}$. 
Thus 
\begin{equation*}
  \mathbf{1}\big\{x,x'\in y\big\}
  \leq \mathbf{1}\big\{(x,x')\notin \mathcal{E}\big\},
\end{equation*}
which implies
\begin{equation*}
  \sum_y P(y|x')\mathbf{1}\big\{x,x'\in y\big\}
  \leq \mathbf{1}\big\{(x,x')\notin \mathcal{E}\big\}.
\end{equation*}
By (\ref{eq:twoReps}), we have
\begin{equation*}
  \sum_{x\in y}P(x)P(y)
  \leq \sum_{x,x'}P(x)P(x')\mathbf{1}
  \big\{(x,x')\notin \mathcal{E}\big\}.
\end{equation*}
Calculating the logarithm of both sides and maximizing the left
hand side over $\Delta(G,P)$ gives
$H_2(G,P)\geq I_2(G,P)$. 
\end{proof}

\section*{Appendix B\\Connection to R{\'e}nyi's Result}

We next state R{\'e}nyi's result \cite[Equation (5.3)]{Renyi2}
in the context of zero-error list codes over the identity channel.
For the identity channel setting described above, 
fix $\epsilon\in (0,1)$, positive integer $L\geq 1$, and 
probability mass function $P$ on $\mathcal{X}$. 
For every positive integer $n$ and $x^n\in\mathcal{X}^n$, 
define
\begin{equation} \label{eq:iidDistP}  
  P_n(x^n)\coloneqq \prod_{i=1}^n P(x_i).
\end{equation}
For positive integers $M$ and $n$, let $\Phi_{M,n}\colon [M]\rightarrow\mathcal{X}^n$
be a random mapping with i.i.d.\ values $\Phi_{M,n}(1),\dots,\Phi_{M,n}(M)$, each
with distribution $P_n(x^n)$. Define $n^*_{L+1}(M,\epsilon)$
as the least positive integer $n$ for which
\begin{equation*}
  \pr\big\{\Phi_{M,n}\text{ is an }(M,n,L)\text{ zero-error list code}
  \big\}\geq 1-\epsilon.
\end{equation*}
In \cite{Renyi2}, R{\'e}nyi states that
\begin{equation} \label{eq:RenyiBDay}
  \lim_{M\rightarrow\infty}
  \frac{\log M}{n^*_{L+1}(M,\epsilon)}
  =\frac{L}{L+1}\cdot H_{L+1}(P).
\end{equation}

We next describe a connection between R{\'e}nyi's
result and our result concerning the birthday problem. Consider
the scenario where the distribution of each codeword 
is given by (\ref{eq:iidDistP}), the size of the message
set equals $M_n\coloneqq\lfloor 2^{nR}\rfloor$ for some
$R\geq 0$, and (\ref{eq:bDayLimit})
holds. We show that both (\ref{eq:OurResultBDay})
and (\ref{eq:RenyiBDay}) imply\footnote{More precisely,
(\ref{eq:OurResultBDay}) gives (\ref{eq:BDayImplication}) 
with strict inequality.}
\begin{equation} \label{eq:BDayImplication}
  R\leq\frac{L}{L+1}\cdot H_{L+1}(P).
\end{equation}

First note that for $P_n$ given by (\ref{eq:iidDistP}),
(\ref{eq:OurResultBDay}), 
together with Part (iii) of 
Proposition \ref{prop:thetaProperties},
gives
\begin{equation*}
  \lim_{n\rightarrow\infty}
  \Big[(L+1)\log \lfloor 2^{nR}\rfloor -nLH_{L+1}(P)\Big]=-\infty,
\end{equation*}
which directly leads to (\ref{eq:BDayImplication}).

Next we show that
(\ref{eq:RenyiBDay}) implies (\ref{eq:BDayImplication}).
First note that for all $n$,
$F_n$ has the same distribution as $\Phi_{M_n,n}$.
In addition, by assumption, for sufficiently large $n$, 
\begin{equation*}
  \pr\big\{F_n\text{ is an }(M_n,n,L)\text{ zero-error list code}
  \big\}\geq 1-\epsilon.
\end{equation*}
Therefore, by the definition of $n^*_{L+1}$,
we have
\begin{equation*}
  n\geq n^*_{L+1}(M_n,\epsilon).
\end{equation*}
Thus if $R>0$, 
\begin{equation*}
  \lim_{M\rightarrow\infty}
  \frac{\log M}{n^*_{L+1}(M,\epsilon)}
  = \lim_{n\rightarrow\infty}
  \frac{\log \lfloor 2^{nR}\rfloor}
  {n^*_{L+1}(\lfloor 2^{nR}\rfloor,\epsilon)}
  \geq \lim_{n\rightarrow\infty}
  \frac{1}{n}\log \lfloor 2^{nR}\rfloor
  =R.
\end{equation*}
Applying (\ref{eq:RenyiBDay}) completes the proof. 

\section*{Acknowledgments}
The first author thanks Ming Fai Wong for helpful 
discussions regarding an earlier version of Theorem \ref{thm:main}.

\bibliographystyle{IEEEtran}
\bibliography{ref}{}

\end{document}